\documentclass[a4paper,USenglish,cleveref, autoref, thm-restate, numberwithinsect,notab]{lipics-v2021}

\hideLIPIcs  


\bibliographystyle{plainurl}

\title{An ETH-Tight FPT Algorithm for Rejection-Proof Set Packing with Applications to Kidney Exchange}


\titlerunning{An ETH-Tight FPT Algorithm for Rejection-Proof Set Packing}

\author{Bart M.\,P. Jansen}
{Eindhoven University of Technology, The Netherlands}
{b.m.p.jansen@tue.nl}
{https://orcid.org/0000-0001-8204-1268}{Supported by the Dutch Research Council (NWO) through Gravitation-grant NETWORKS-024.002.003.}

\author{Jeroen S.\,K. Lamme}
{Eindhoven University of Technology, The Netherlands}
{j.s.k.lamme@tue.nl}
{https://orcid.org/0009-0005-8901-2271}
{Supported by the Dutch Research Council (NWO) through Gravitation-grant NETWORKS-024.002.003.}

\author{Ruben F.\,A. Verhaegh}
{Eindhoven University of Technology, The Netherlands}
{r.f.a.verhaegh@tue.nl}
{https://orcid.org/0009-0008-8568-104X}
{}

\authorrunning{B.M.P.~Jansen, J.S.K.~Lamme, and R.F.A.~Verhaegh}

\Copyright{B.M.P.~Jansen, J.S.K.~Lamme, and R.F.A.~Verhaegh}


\ccsdesc[500]{Theory of computation~Parameterized complexity and exact algorithms}
\ccsdesc[300]{Applied computing~Health informatics}
\ccsdesc[300]{Theory of computation~Algorithmic game theory}
\ccsdesc[100]{Mathematics of computing~Combinatorial algorithms}

\keywords{Parameterized complexity, Multi-agent kidney exchange, Kernelization, Set packing}

\category{} 

\relatedversion{} 




 \nolinenumbers 

\EventEditors{John Q. Open and Joan R. Access}
\EventNoEds{2}
\EventLongTitle{42nd Conference on Very Important Topics (CVIT 2016)}
\EventShortTitle{CVIT 2016}
\EventAcronym{CVIT}
\EventYear{2016}
\EventDate{December 24--27, 2016}
\EventLocation{Little Whinging, United Kingdom}
\EventLogo{}
\SeriesVolume{42}
\ArticleNo{23}
\usepackage{lineno}
\newsavebox{\socgnlbox}%

\newcommand{\problem}[3]{
  \begin{nolinenumbers}
    \vspace{1mm}
    \noindent\fbox{
        \begin{minipage}{0.96\textwidth}
        \begin{tabularx}{\textwidth}{@{\hspace{\parindent}}l X}
            \multicolumn{2}{@{\hspace{\parindent}}l}{\textsc{#1}} \\
            \textbf{Input:} & #2 \\
            \textbf{Task:} & #3 \\
        \end{tabularx}
        \end{minipage}
    }
  \vspace{1mm}
  \end{nolinenumbers}
}

\newtheorem{rrule}{Reduction rule}

\newtheorem{fact}[theorem]{Fact}
\Crefname{claim}{Claim}{Claims}

\newcommand{\Oh}{\mathcal{O}}

\newcommand{\C}{\mathcal{C}}
\renewcommand{\S}{\mathcal{S}}
\newcommand{\X}{\mathcal{X}}

\newcommand{\Crej}{\C_\mathrm{rej}}
\newcommand{\Cint}{\C_\mathrm{int}}
\newcommand{\Xrej}{\X_\mathrm{rej}}
\newcommand{\Xint}{\X_\mathrm{int}}

\begin{document}

\maketitle
\begin{abstract}
We study the parameterized complexity of a recently introduced multi-agent variant of the \textsc{Kidney Exchange} problem. Given a directed graph~$G$ and integers~$d$ and $k$, the standard problem asks whether~$G$ contains a packing of vertex-disjoint cycles, each of length~$\leq d$, covering at least~$k$ vertices in total. In the multi-agent setting we consider, the vertex set is partitioned over several agents who reject a cycle packing as solution if it can be modified into an alternative packing that covers more of their own vertices. A cycle packing is called \emph{rejection-proof} if no agent rejects it and the problem asks whether such a packing exists that covers at least~$k$ vertices. 

We exploit the sunflower lemma on a set packing formulation of the problem to give a kernel for this~$\Sigma_2^P$-complete problem that is polynomial in $k$ for all constant values of $d$. We also provide a $2^{\Oh(k \log k)} + n^{\Oh(1)}$ algorithm based on it and show that this FPT algorithm is asymptotically optimal under the ETH. Further, we generalize the problem by including an additional positive integer~$c$ in the input that naturally captures how much agents can modify a given cycle packing to reject it. For every constant~$c$, the resulting problem simplifies from being~$\Sigma_2^P$-complete to NP-complete. The super-exponential lower bound already holds for~$c=2$, though. We present an ad-hoc single-exponential algorithm for~$c = 1$. 
These results reveal an interesting discrepancy between the classical and parameterized complexity of the problem and give a good view of what makes it hard.
\end{abstract}
\newpage
\section{Introduction}
\label{sec:introduction}
The goal of this paper is to analyze the parameterized complexity of algorithmic problems that arise from a recently introduced multi-party variant of kidney exchange problems. Before describing our results, we first present some relevant context.

For patients with severely reduced kidney function, transplantation is often the preferred treatment~\cite{Burra07}. As it is possible to live with only one healthy kidney, live kidney donation is employed to reduce waiting times for sparsely available donor organs. In this setting, a willing donor offers one of their kidneys for transplantation to a patient. The patient and donor must be medically compatible for a successful transplantation. When the patient and donor are not compatible, they can opt to participate in a kidney exchange program (KEP). 

Such programs aim to link multiple patient-donor pairs so that a donor from one pair is compatible with a patient from another pair. Their goal is to group sets of patient-donor pairs, such that within each group, a cyclic order of compatibility exists. Performing this cycle of transplants means that donors only undergo surgery if their corresponding patient receives a kidney in that exchange. Individual programs each have their own additional restrictions~\cite{Biro21}. It is common to restrict the length of transplantation cycles to some constant~$d$, due to practical considerations which include doing the associated surgeries simultaneously.

We study these KEPs from an algorithmic perspective. Given a set of participating patient-donor pairs and the medical compatibilities between them, the algorithmic task becomes to find a combination of transplantation cycles that maximizes the total number of patients receiving a healthy kidney. We can model each patient-donor pair as a vertex and the compatibility of the donor of a pair~$a$ to the patient of a pair~$b$ as a directed edge from~$a$ to~$b$. We call a cycle of length at most~$d$ a~\emph{$d$-cycle} and use this notation to obtain the following algorithmic graph problem that describes KEPs in one of their most basic forms.

\problem{Kidney Exchange with $d$-cycles (KE-$d$)}
{A directed graph $G$ and an integer $k$.}
{Determine whether there is a set $\C$ of pairwise vertex-disjoint $d$-cycles such that $\C$ covers at least $k$ vertices.}

This problem has already been studied from the perspective of parameterized algorithms in the literature, with fixed-parameter tractable (FPT) algorithms w.r.t.~the input parameter~$k$ and the number of neighborhood equivalence classes as examples~\cite{Hebert-JohnsonL24, MaitiD22}.

\subparagraph{Multi-agent KEPs} Extensions of the standard kidney exchange problem have also been studied. In this work, we consider one such extension which we now motivate and introduce. It is known that the fraction of patients to receive a healthy kidney can increase significantly with an increase in the total number of participating pairs~\cite{Agarwal19}. This motivates collaborations between multiple programs, from now on referred to as agents. Such collaborations, however, can be hindered by a mismatch between the individual interests of agents, as these need not align with the collective goal of helping as many patients as possible overall. In particular, a socially optimal solution---in which as many patients as possible are treated overall---is not guaranteed to help as many people from each individual agent as that agent could have achieved on their own. This may stop agents from participating in collaborative programs or, when they do participate, move them to withhold information from the collective.

Several ways of incorporating fairness into multi-agent KEPs have been studied~\cite{Agarwal19,Ashlagi15,Biro20,Carvalho23, Carvalho16, Hajaj15, Klimentova21}. 
Recently, Blom, Smeulders, and Spieksma~\cite{Blom24} also introduced a framework to capture fairness via a game-theoretical concept of \textit{rejections}. They translate this into an algorithmic task that asks for a solution that is not rejected by any agent. We study the resulting algorithmic task. To introduce it, we first define the concept of rejections.

Recall the original KE-$d$ problem on directed graphs. To encode multiple agents into the input, we partition the vertex set of the input graph over the agents. In the obtained graph, we refer to a cycle that only contains vertices from a single agent $i$ as an \emph{$i$-internal} cycle. Like in the original~KE-$d$ problem, a solution consists of a packing $\C$ of pairwise vertex-disjoint $d$-cycles. A given such packing $\C$ will be rejected by an agent under the conditions listed in the definition below. See also \autoref{fig:rejection-example} for an example, adapted from a figure by Blom et al.

\begin{figure}[t]
    \centering
        \begin{minipage}[t]{.445\linewidth}
                \centering                \includegraphics[width=.8\linewidth]{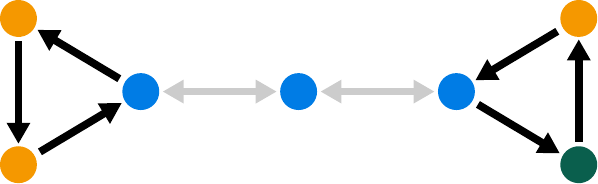}
                \label{fig:rejection-example-1}
        \end{minipage}
        \hfill
        \begin{minipage}[t]{.445\linewidth}
                \centering                \includegraphics[width=.8\linewidth]{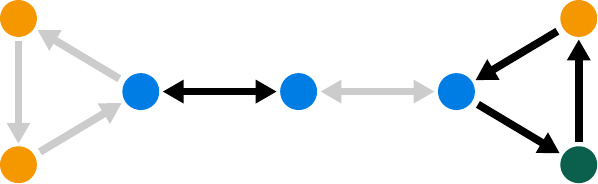}
                \label{fig:rejection-example-2}
        \end{minipage}
        \caption{A socially optimal packing of cycles is shown on the left and covers six vertices. However, the blue agent rejects this solution as they could modify it to cover more blue vertices as shown on the right. They can do this by rejecting the left-most $3$-cycle and including an internal $2$-cycle.}
        \label{fig:rejection-example}
    \end{figure}
\begin{definition} \label{def:rejection}
    Let $G$ be a graph, where $V(G) = V_1 \cup \ldots \cup V_p$ is partitioned over~$p$ different agents. Let $d$ be an upper bound on the length of allowed transplantation cycles. Let~$\C$ be a set of pairwise vertex-disjoint $d$-cycles in $G$. If there is an agent $i \in [p]$ for which there exist
    \begin{itemize}
        \item a subset $\Crej \subseteq \C$, and
        \item a set $\Cint$ of $i$-internal $d$-cycles, such that $\C':=(\C \setminus \Crej) \cup \Cint$ is a set of cycles that are still pairwise vertex-disjoint while $\C'$ covers more vertices from $V_i$ than $\C$ does,
    \end{itemize}
     then, agent $i$ is said to \emph{reject} the packing $\C$. Equivalently, we may say that agent $i$ specifically rejects the cycles in~$\Crej$ and we refer to $\C'$ as an \emph{alternative packing} by which agent $i$ rejects.
\end{definition}
This definition captures that the rejecting agent~$i$ could modify a given cycle packing $\C$ into another packing $\C'$ that benefits more of their own patients by only making changes pertaining to patients within their control. We call a packing that is not rejected by any agent \textit{rejection-proof}. Then, a guarantee to always find such a solution also serves as a guarantee for agents that participation in the joint program is never detrimental to their personal interests. This led Blom et al.~\cite{Blom24} to introduce the following problem, whose parameterized complexity we investigate in this work.

\problem{Rejection-Proof Kidney Exchange with $d$-cycles (RPKE-$d$)}
{A directed graph $G$ where $V(G) = V_1 \cup \ldots \cup V_p$ is partitioned among $p$ agents, and an integer $k$.}
{Determine whether there is a \emph{rejection-proof} set $\C$ of pairwise vertex-disjoint $d$-cycles such that $\C$ covers at least $k$ vertices.}
For ease of notation, we refer to any packing of pairwise vertex-disjoint $d$-cycles that covers at least $k$ vertices in an RPKE-$d$ instance as a \emph{candidate solution}. Hence, a packing of $d$-cycles forms a solution to an instance if it is both a candidate solution and rejection-proof.

The above problem takes a leap in complexity compared to the single-agent problem KE-$d$. Whereas KE-$d$ is known to be NP-complete for all finite~$d \geq 3$~\cite{Abraham07}, Blom et al. have shown RPKE-$d$ to be $\Sigma_2^P$-complete for~$d \geq 3$, even when there are only two participating agents~\cite[Theorem 1]{Blom24}. In fact, it can be shown that checking whether a given agent rejects a given candidate solution is already NP-hard since this is at least as hard as checking whether a given packing is optimal in the NP-hard setting of only one agent. Partly motivated by this observation, we also consider a more restricted definition of rejections.

For some integer $c \geq 0$, we say that an agent \emph{$c$-rejects} a packing $\C$ if the rejection criterion from \autoref{def:rejection} can be satisfied with~$|\Crej| \leq c$. We call a packing that no agent $c$-rejects \emph{$c$-rejection-proof}.
Now, for each pair of positive integers~$c$ and~$d$ we define the problem \textsc{$c$-Rejection-Proof Kidney Exchange with $d$-cycles} ($c$-RPKE-$d$) as the variant of the above RPKE-$d$ problem in which we aim to determine whether the input contains a candidate solution that is $c$-rejection-proof.

The integer $c$ in this definition indicates some level of complexity that a rejection can have. Additionally, it provides a way to interpolate between the problems KE-$d$ ($c=0$) and RPKE-$d$ ($c=\infty$). More importantly, for any constant~$c$, we observe that the task of determining whether a given agent $c$-rejects a given packing $\C$ of $d$-cycles is polynomial-time solvable when $d$ is also constant. This is witnessed by the fact that, if an agent $c$-rejects the packing $\C$, an alternative packing can always be obtained by rejecting a constant number of cycles and including a constant number of internal cycles. It follows that $c$-RPKE-$d$ is contained in NP when $c$ and $d$ are constants, as opposed to the $\Sigma_2^P$-complete RPKE-$d$ problem that does not limit the number of cycles that agents are allowed to reject.

\subparagraph{Our contribution} We study the complexity of RPKE-$d$ and $c$-RPKE-$d$ parameterized by~$k$, the number of vertices to be covered, and we present both positive and negative results. We prove our positive results on related set packing problems that generalize the two graph problems. This allows us to use some of the tools developed for such problems. Prior to our work, the only positive result known for RPKE-$d$ was a polynomial-time algorithm for~$d=2$~\cite{Carvalho23}.

We give an algorithm that, for constant values of~$d$, solves RPKE-$d$ in $2^{\Oh(k \log k)} + n^{\Oh(1)}$ time on $n$-vertex graphs. It consists of two parts, the first of which is a kernelization algorithm. It takes an arbitrary instance as input and outputs an equivalent instance of size $\Oh\left( (k^{d+1})^d \right)$ in polynomial time, via a careful agent-aware application of the sunflower lemma~\cite{Erdos60} to eliminate transplantation cycles. 
In set packing problems, removing sets during preprocessing typically comes at the risk of removing sets that are required to make an optimal solution, thereby possibly transforming YES-instances into NO-instances. The sunflower lemma provides a tool to circumvent this. Our rejection-proof problem settings introduce an additional type of behavior that needs care: transplantation cycles that do not belong to optimal solutions may still play a role in rejections. Removing the wrong cycles may cause an agent not to reject a certain candidate solution that it otherwise would have. 
We present a two-step reduction approach guaranteed to preserve the answer. The second part of the algorithm consists of brute-force on the reduced instance. 

We give a reduction from \textsc{Subgraph Isomorphism} to show that, under the Exponential Time Hypothesis (ETH), the algorithm is asymptotically optimal in terms of $n$ and $k$. We even show the stronger bound (independent of $k$) that a $2^{o(n \log n)}$ time algorithm for RPKE-$d$ or $c$-RPKE-$d$ would violate the ETH for any $c \geq 2$ and $d \geq 3$. We exploit the mechanism of rejections to facilitate our reduction from the \textsc{Subgraph Isomorphism} problem, for which such a superexponential ETH-based lower bound is known~\cite{CyganFGKMPS16}. This gives the rejection-proof kidney exchange problems a similar complexity status to other problems in the literature, such as \textsc{Metric Dimension} and \textsc{Geodetic Set} parameterized by vertex cover number, for which a brute-force algorithm on a polynomial kernel is also ETH-optimal~\cite{FoucaudGK0IST25}. 

To complement the ETH-hardness of $c$-RPKE-$d$ with $c \geq 2$, we give a $3^n \cdot n^{\Oh(1)}$ time algorithm for the problem when $c=1$. This allows us to see surprising jumps in the algorithmic complexity of $c$-RPKE-$d$ for varying values of $c$. While no big difference in complexity is obtained between $c=1$ and $c=0$ (in which case the problem can be solved in~$2^n \cdot n^{\Oh(1)}$ time with a standard application of dynamic programming over subsets~\cite[Chapter 6]{CyganFGKMPS16}), our results show that the problem does become computationally harder from a parameterized perspective when $c=2$ (assuming the ETH). We also see that no more jumps in parameterized complexity appear for larger values of $c$. The surprising aspect here is that this one discrete jump in the parameterized complexity of $c$-RPKE-$d$ does not coincide with the jump in the classical complexity of the problem between being contained in NP vs.~not being contained in NP: we have seen that, for constant $d\ge 3$, the $c$-RPKE-$d$ problem is contained in NP for \emph{every} constant value of $c$, while the RPKE-$d$ problem is $\Sigma_2^P$-complete.

Another jump in complexity is observed when looking at the number of vertices that are \emph{not} covered by a solution. When asked to find a solution that covers all vertices in the input, the problem again collapses to the NP-complete packing problem KE-$d$. However, our ETH-hardness proof shows that finding a solution that covers all but one vertex is essentially as difficult as the general problem in terms of running time, assuming the ETH. Combined, our results yield a good understanding of what makes rejection-proof kidney exchange hard.

\subparagraph{Organization} The remainder of this work is organized as follows. We start with some preliminaries in \autoref{sec:preliminaries}. Next, we give a polynomial kernel and an FPT algorithm for a set packing generalization of RPKE-$d$ in \autoref{sec:kernel}, and we present a matching ETH-lower bound in \autoref{sec:eth-lower-bound}. In \autoref{sec:1-rejection-algo}, we give a single-exponential algorithm under $1$-rejections and we conclude in \autoref{sec:conclusion}. The proofs of statements marked $(\bigstar)$ are deferred to the appendix.

\section{Preliminaries}
\label{sec:preliminaries}
All graphs considered are directed, unless stated otherwise. We use standard notation for graphs and concepts from parameterized complexity; see the textbook by Cygan et al.~\cite{CyganFKLMPPS15}. We write $\langle v_1, \ldots, v_\ell \rangle$ to denote the directed cycle with edge set $\{(v_1, v_2), \ldots, (v_{\ell-1}, v_\ell), (v_\ell, v_1) \}$. We use the term \emph{packing} to refer to a collection of sets that satisfies certain pairwise disjointness conditions, such as collections of vertex-disjoint cycles in a graph or collections of pairwise-disjoint subsets of a universe. Since the disjointness conditions differ depending on the context, we always specify which condition holds for the packing at hand. 

A set system is a pair $(U, \S)$ where $U$ is a universe of elements and $\S \subseteq 2^U$ is a collection of subsets of this universe. For a subset $U' \subseteq U$, we use the notation $\S[U']$ to denote the collection of sets from $\S$ that contain only elements of $U'$. A \emph{hitting set} of $\S$ is a subset of~$U$ that intersects each set $S \in \S$. We use the following folklore lemma.

\begin{restatable}{lemma}{exactSetCoverLemma} $(\bigstar)$\label{lm:exact set cover}
    For a set system $(U, \S)$, it can be checked in $2^{|U|} \cdot (|U| + |\S|)^{\Oh(1)}$ time whether $\S$ contains a collection of pairwise disjoint sets that covers all elements of $U$.
\end{restatable}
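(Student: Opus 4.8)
The plan is to prove \autoref{lm:exact set cover} by a dynamic-programming argument over subsets of the universe~$U$, which is exactly the standard technique for exact cover by subsets (sometimes called the \emph{set partition} problem) and matches the claimed $2^{|U|}\cdot(|U|+|\S|)^{\Oh(1)}$ running time. First I would set up a Boolean table $T$ indexed by subsets $U'\subseteq U$, where $T[U']$ is true if and only if $\S$ contains a collection of pairwise disjoint sets whose union is exactly $U'$. The base case is $T[\emptyset]=\texttt{true}$, realized by the empty collection. The key observation for the recurrence is that, if $U'\neq\emptyset$ is exactly covered by a disjoint subcollection, we may fix an arbitrary ``pivot'' element $u\in U'$ (say the minimum-indexed one) and condition on which set of the collection covers~$u$: there must be some $S\in\S$ with $u\in S\subseteq U'$ such that $U'\setminus S$ is itself exactly coverable by the remaining (disjoint) sets. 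Hence
\[
  T[U'] \;=\; \bigvee_{\substack{S\in\S:\ u\in S\subseteq U'}} T[U'\setminus S],
\]
and the answer to the instance is $T[U]$.

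Next I would argue correctness of this recurrence in both directions: soundness (if the disjunction holds, concatenating the witnessing set $S$ with a disjoint exact cover of $U'\setminus S$ gives a disjoint exact cover of $U'$, since $S$ is disjoint from $U'\setminus S$) and completeness (any disjoint exact cover of $U'$ contains a unique set covering the pivot $u$, and removing it leaves a disjoint exact cover of $U'\setminus S$, so the corresponding disjunct is true). This is a routine induction on $|U'|$, so I would state it briefly rather than belabor it. For the running time, I process subsets $U'$ in order of increasing size (or simply in increasing order of the integer encoding the subset, which respects the subset partial order enough for this recurrence since $U'\setminus S\subsetneq U'$). There are $2^{|U|}$ subsets, and for each we iterate over all $S\in\S$, checking in $\Oh(|U|)$ time whether $u\in S\subseteq U'$ and, if so, looking up $T[U'\setminus S]$. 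This yields a total running time of $2^{|U|}\cdot\Oh(|\S|\cdot|U|)$, which is within the claimed bound; storing the table takes $2^{|U|}$ space.

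The only mild subtlety I anticipate is bookkeeping rather than a genuine obstacle: one must ensure that sets $S\in\S$ with $S\not\subseteq U'$ are excluded (so that the recursion stays inside $U'$ and the pieces genuinely remain disjoint), and that subset membership and the lookup of $T[U'\setminus S]$ are implemented within the stated polynomial overhead, e.g.\ by representing subsets as bitmasks so that containment tests and set differences are cheap. I would also remark that one does not even need to iterate only over $S$ containing the pivot $u$ for correctness — iterating over all $S\subseteq U'$ and taking $\bigvee_S T[U'\setminus S]$ would also be correct — but restricting to sets containing the pivot is the classical trick that keeps the per-subset work proportional to $|\S|$ and avoids an extra factor; since we are not optimizing below the $(|U|+|\S|)^{\Oh(1)}$ slack, either version suffices. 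This completes the plan.
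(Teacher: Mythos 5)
Your proposal is correct and uses essentially the same approach as the paper: a dynamic program over subsets of $U$ recording which subsets admit an exact disjoint cover. The only (immaterial) difference is in how the recurrence is organized — the paper indexes the table additionally by a prefix $\{S_1,\ldots,S_i\}$ of the set list and branches on whether $S_i$ is used, whereas you fix a pivot element of $U'$ and branch on which set covers it; both give the same $2^{|U|}\cdot(|U|+|\S|)^{\Oh(1)}$ bound.
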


Next, we present the sunflower lemma by Erdős and Rado~\cite{Erdos60} and a related definition of \emph{sunflowers} in set systems. In a set system~$(U, \S)$, a sunflower~$\mathcal{F}$ with~$z$ petals and \emph{core}~$Y$ is a collection of sets $S_1, \ldots, S_z \in \S$ such that $S_i \cap S_j = Y$ for all~$i \neq j$. We refer to a set~$S_i \setminus Y$ as the \emph{petal} corresponding to set~$S_i$ and we require all petals to be non-empty. We state a variant of it as proven in the textbook by Flum and Grohe~\cite[page~212]{FlumGrohe12}.

\begin{lemma}[Sunflower lemma]\label{lm:sunflower}
    Let $(U, \S)$ be a set system, such that each set in $\S$ has size at most $d$. If $|\S| > d \cdot d!(z-1)^d$, then the set system contains a sunflower with $z$ petals and such a sunflower can be computed in time polynomial in $|U|$, $|\S|$, and $z$.
\end{lemma}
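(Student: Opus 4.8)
The plan is to prove the statement by induction on $d$, the upper bound on the sizes of the sets in $\S$, following the classical argument of Erdős and Rado as presented by Flum and Grohe. Throughout I assume every set in $\S$ is nonempty (the empty set is irrelevant to sunflowers with nonempty petals, so it can be discarded at the outset if present). For the base case $d = 1$, every set in $\S$ is a singleton, and the hypothesis $|\S| > 1 \cdot 1! \cdot (z-1)^1 = z-1$ gives at least $z$ distinct singletons; these are pairwise disjoint and hence form a sunflower with empty core and $z$ nonempty petals.

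For the inductive step, assume $d \geq 2$ and that the claim holds for all smaller size bounds. First I would greedily compute an inclusion-maximal subcollection $\mathcal{A} = \{A_1, \dots, A_m\} \subseteq \S$ of pairwise disjoint sets. If $m \geq z$, then any $z$ of these sets already constitute a sunflower with empty core, so suppose $m \leq z - 1$ and put $Y := A_1 \cup \dots \cup A_m$, which satisfies $|Y| \leq d \cdot m \leq d(z-1)$. By maximality of $\mathcal{A}$, every set of $\S$ intersects $Y$, so counting set--element incidences and applying the pigeonhole principle yields an element $y \in Y$ contained in strictly more than
\[
\frac{|\S|}{|Y|} \;>\; \frac{d \cdot d!\,(z-1)^d}{d(z-1)} \;=\; d!\,(z-1)^{d-1} \;\geq\; (d-1)\cdot(d-1)!\,(z-1)^{d-1}
\]
sets of $\S$, where the last inequality uses $d \geq d-1$. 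Now set $\S' := \{\, S \setminus \{y\} : S \in \S,\ y \in S \,\}$. Every set of $\S'$ has size at most $d-1$, and $|\S'|$ equals the number of sets of $\S$ that contain $y$ (two distinct such sets stay distinct after deleting $y$), so $|\S'| > (d-1)\cdot(d-1)!\,(z-1)^{d-1}$. By the induction hypothesis, $\S'$ contains a sunflower with $z$ nonempty petals $P_1, \dots, P_z$ and some core $Y'$. Since $y$ was removed from all sets of $\S'$ we have $y \notin Y'$ and $y \notin P_i$, so re-inserting $y$ into the core and into each of the $z$ sets turns this into a sunflower of $\S$ with core $Y' \cup \{y\}$ and the same petals $P_1, \dots, P_z$.

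The algorithmic claim comes for free from the constructive nature of the argument: computing an inclusion-maximal disjoint subcollection is a single greedy sweep, the element $y$ is found by one pass over the incidences, and the recursion strictly decreases the set-size bound and hence has depth at most $d \leq |U|$, so the total running time is polynomial in $|U|$, $|\S|$, and $z$. I do not expect a genuine obstacle here, as the lemma is classical; the two points that require care are the constant-chasing in the pigeonhole step (in particular the passage from $d!\,(z-1)^{d-1}$ down to the threshold $(d-1)\cdot(d-1)!\,(z-1)^{d-1}$ needed to invoke the induction hypothesis, which is precisely where the extra factor of $d$ in the hypothesis is used) and threading through the requirement that all petals remain nonempty, which is exactly why the empty set is removed from $\S$ before the induction starts.
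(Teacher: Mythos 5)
Your proof is correct and is the standard Erd\H{o}s--Rado induction; the paper does not prove this lemma itself but imports it verbatim from Flum and Grohe, so there is no competing argument to compare against. The only detail you gloss over is that the recursive system $\S'$ may contain the empty set (when $\{y\} \in \S$), so one must discard it before invoking the induction hypothesis; this costs at most one set, and the slack $d!\,(z-1)^{d-1} - (d-1)\cdot(d-1)!\,(z-1)^{d-1} = (d-1)!\,(z-1)^{d-1} \geq 1$ (for $z \geq 2$, the only nontrivial case) absorbs it, so the argument goes through.
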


The following lemma follows from the definitions of sunflowers and hitting sets. 

\begin{restatable}{lemma}{differentsolution}\label{lm: constructing a different solution}
    $(\bigstar)$ Let $(U, \S)$ be a set system with sets of size $\leq d$ for which there exists a hitting set of size $\leq h$, for some integers $d,h$. Let $\X \subseteq \S$ be a collection of disjoint sets and let $\mathcal{F} \subseteq \S$ be a sunflower of size $d(h-1)+2$ that contains one of the sets $X \in \X$. Then, there is a set $X' \in \mathcal{F} \setminus \{X\}$ such that $(\X \setminus {X}) \cup \{X'\}$ is a collection of pairwise disjoint sets.
\end{restatable}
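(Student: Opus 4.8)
The plan is to pin down how the parameters $d$, $h$, and the sunflower size $d(h-1)+2$ interact, so that a short counting argument delivers the required petal. First I would record the two quantitative facts that drive everything. Since $(U,\S)$ has a hitting set $H$ with $|H| \le h$, and the sets of $\X \subseteq \S$ are pairwise disjoint, $H$ must contain a distinct element of each set of $\X$ (pick one hit element per set; the choices are distinct because the sets are disjoint). This gives an injection $\X \to H$, hence $|\X| \le h$ and so $|\X \setminus \{X\}| \le h-1$. As every set of $\X \setminus \{X\}$ has size at most $d$, the union $W := \bigcup(\X \setminus \{X\})$ satisfies $|W| \le d(h-1)$.

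Next I would use the structure of the sunflower. Write $Y$ for the core of $\mathcal{F}$, so every set of $\mathcal{F}$ is the disjoint union of $Y$ with its (non-empty) petal, and distinct petals are pairwise disjoint. Since $Y \subseteq X$ and $X$ is disjoint from every other set of $\X$ (as $\X$ is a disjoint collection), the core $Y$ is disjoint from $W$. It therefore suffices to produce a set $X' \in \mathcal{F} \setminus \{X\}$ whose petal is disjoint from $W$: then $X' = Y \cup \mathrm{petal}(X')$ is disjoint from $W$, which is precisely the statement that $(\X \setminus \{X\}) \cup \{X'\}$ is a collection of pairwise disjoint sets, disjointness within $\X \setminus \{X\}$ being inherited from $\X$.

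The final step is the counting. Because the petals of $\mathcal{F}$ are pairwise disjoint, each element of $W$ lies in at most one petal, so at most $|W| \le d(h-1)$ of the $d(h-1)+2$ petals intersect $W$; hence at least two petals are disjoint from $W$. At most one of these two is the petal of $X$ itself, so at least one of them is $\mathrm{petal}(X')$ for some $X' \in \mathcal{F} \setminus \{X\}$, which completes the argument. I do not expect a genuine obstacle here; the only points that need a moment of care are spelling out the injection $\X \to H$ that yields $|\X| \le h$, and checking that the ``$+2$'' in the sunflower size is exactly what is needed to absorb the petal of $X$ and still leave a usable petal.
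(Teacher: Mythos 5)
Your proof is correct and follows essentially the same route as the paper's: bounding $|\X|\le h$ via the hitting set, noting the core is already disjoint from $\X\setminus\{X\}$ since it lies in $X$, and counting that at most $d(h-1)$ of the pairwise-disjoint petals can meet $\bigcup(\X\setminus\{X\})$, leaving two free petals of which at most one belongs to $X$. No differences worth noting.
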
 

When analyzing the running time of an algorithm whose input is a graph, we denote the number of vertices and edges in the input graph by~$n$ and~$m$, respectively. For algorithms operating on a set system, we use~$n$ to denote the number of elements in the universe and~$m$ for the total number of sets in the system.

\subparagraph*{Rejection-proof set packing}
As mentioned, we present our positive results for problems that generalize the kidney exchange problems defined in the introduction. Intuitively, one may think of translating an RPKE-$d$ instance on a graph $G$ into this generalized setting by constructing a set system $(U, \S)$ with $U = V(G)$ and including a set in $\S$ for every vertex set that forms a $d$-cycle in $G$. Related definitions follow analogously, but we briefly reiterate and name them below.
To represent $p$ different agents in a set system $(U, \S)$, the universe $U = U_1 \cup \ldots \cup U_p$ is partitioned into $p$ sets. A set $S \in \S$ is called \emph{$i$-internal} if $S \subseteq U_i$.

\begin{definition} \label{def:set-rejection}
    Let $(U, \mathcal{S})$ be a set system such that the universe $U = U_1 \cup \ldots \cup U_p$ is partitioned over $p$ different agents. Let $\X \subseteq \S$ be a packing of pairwise disjoint sets. Suppose there is an agent $i \in [p]$ for which there exist:
    \begin{itemize}
        \item a subset $\Xrej \subseteq \X$, and
        \item a set $\Xint$ of $i$-internal sets such that $\X' := (\X \setminus \Xrej) \cup \Xint$ is a collection of sets that are still pairwise disjoint, while $\X'$ covers more elements from $U_i$ than $\X$ does,
    \end{itemize}
    then, agent $i$ is said to \emph{reject} the packing $\X$. Equivalently, we may say that agent $i$ specifically rejects the sets in~$\Xrej$ and we refer to $\X'$ as an alternative packing by which agent $i$ rejects.
\end{definition}

We call a packing that is not rejected by any agent \emph{rejection-proof}, and we use this notation to introduce the following generalized problem statement.

\problem{Rejection-Proof $d$-Set Packing (RP-$d$-SP)}
{A universe $U = U_1 \cup \ldots \cup U_p$ that is partitioned over $p$ agents, a collection $\mathcal{S} \subseteq 2^U$ of sets of size $\leq d$, and an integer $k$.}
{Determine whether there is a rejection-proof collection $\mathcal{X} \subseteq \mathcal{S}$ of disjoint sets that covers at least $k$ elements from~$U$.}

Like in the graph setting, we call any packing of pairwise disjoint sets that cover at least~$k$ vertices a \emph{candidate solution}. Further, for some integer $c \geq 0$, we say that an agent \emph{$c$-rejects} a given packing $\X$ if the rejection criterion from \autoref{def:set-rejection} can be satisfied with $|\Xrej| \leq c$. A packing that no agent $c$-rejects is called \emph{$c$-rejection-proof}. We define \textsc{$c$-Rejection-Proof $d$-Set Packing} ($c$-RP-$d$-SP) as a problem with the same input as RP-$d$-SP, but in which the goal is to determine whether the input contains a $c$-rejection-proof candidate solution.

\subparagraph*{Exponential Time Hypothesis (ETH)}
In \autoref{sec:eth-lower-bound}, we derive a hardness result that is conditional on the well-known ETH. This is a technical conjecture implying that the satisfiability of Boolean $3$-CNF formulas cannot be checked in subexponential time~\cite{ImpagliazzoP01}. Specifically, we use a known hardness result for the \textsc{Subgraph Isomorphism} problem conditional on the ETH. This problem takes two undirected graphs $G$ and $H$ as input and asks whether~$H$ is isomorphic to a subgraph of $G$. More formally, the problem asks whether there is an injective function $\varphi: V(H) \rightarrow V(G)$ such that $\{u,v\} \in E(H)$ implies $\{\varphi(u), \varphi(v)\} \in E(G)$. Such a function is called a subgraph isomorphism from~$H$ to~$G$. The following conditional lower bound on the computational complexity of this problem is known.

\begin{lemma}[{\cite{CyganFGKMPS16}}]
\label{lem:subgraph-isomorphism}
    Assuming the ETH, no algorithm exists that determines in $2^{o(n_G \log n_G)}$ time whether an undirected graph $H$ is isomorphic to a subgraph of $G$, where $n_G = |V(G)|$.
\end{lemma}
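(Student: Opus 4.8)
The plan is to establish this conditional lower bound by a reduction from \textsc{$3$-SAT}, exploiting that, under the ETH together with the Sparsification Lemma, there is no $2^{o(N)}$-time algorithm deciding satisfiability of a $3$-CNF formula~$\Phi$ with~$N$ variables and~$M = \Oh(N)$ clauses. The guiding intuition is a counting one: a subgraph isomorphism from~$H$ into an $n_G$-vertex graph~$G$ is one of at most~$n_G^{n_G} = 2^{\Oh(n_G \log n_G)}$ injective maps (which already yields a matching brute-force algorithm), so to rule out running time~$2^{o(n_G \log n_G)}$ it suffices to encode the~$2^N$ truth assignments of~$\Phi$ inside a \textsc{Subgraph Isomorphism} instance with~$n_G = \Theta(N/\log N)$.

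Concretely, I would fix $t := \Theta(N/\log N)$ and partition the variables of~$\Phi$ into~$t$ blocks $B_1,\dots,B_t$ of~$\Theta(\log t)$ variables each, so that each block admits at most~$t$ partial truth assignments, indexed by~$1,\dots,t$. The graph~$H$ would consist of (i)~a \emph{selection gadget}: for each block~$j$ a representative vertex~$r_j$, together with small rigid ``identifier'' attachments (e.g.\ pendant cliques of pairwise distinct sizes) whose only purpose is to force each~$r_j$ into a prescribed portion of~$G$; and (ii)~a \emph{verification part} that is embeddable precisely when the partial assignments selected for the blocks jointly satisfy every clause of~$\Phi$. On the~$G$ side, one provides, for each block~$j$, vertices representing its~$\leq t$ partial assignments, and then uses the \emph{edges} of~$G$ to record, for every pair (resp.\ small tuple) of blocks whose variables co-occur in a clause, exactly which combinations of partial assignments are locally consistent. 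The point that makes the bound tight is that adjacency in~$G$ already carries~$\Theta(n_G^2)$ bits of such compatibility information at no cost in vertices, so \emph{binary} inter-block constraints come for free; the ternary constraints arising from clauses are handled by first routing~$\Phi$ through a binary constraint satisfaction problem, or by a clause gadget that draws on a shared global pool of~$\Oh(t)$ vertices rather than spending a vertex per clause.

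Granting such a construction, correctness is a routine two-directional check: a satisfying assignment of~$\Phi$ induces the embedding sending each~$r_j$ to the vertex of its chosen block-assignment and extending over the verification part, and conversely any subgraph isomorphism reads off a choice of block-assignments whose enforced clause-consistency makes~$\Phi$ satisfiable. Since $n_G = \Theta(t) = \Theta(N/\log N)$, a hypothetical $2^{o(n_G \log n_G)}$-time algorithm would decide satisfiability of~$\Phi$ in $2^{o(t\log t)} = 2^{o(N)}$ time, contradicting the ETH. I expect the main obstacle to lie entirely in the gadget engineering rather than in the correctness argument: one must keep $|V(G)| = \Theta(N/\log N)$ with \emph{no} additional polylogarithmic factor — otherwise $n_G \log n_G$ becomes superlinear in~$N$ and the reduction proves nothing — which forces the selection gadget, the block-assignment vertices, and the clause-checking to all share one linear-sized vertex set, and requires care that the injectivity of the embedding does not spuriously conflate two blocks that happen to want the same partial-assignment index.
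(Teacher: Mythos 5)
First, a point of context: the paper does not prove this lemma at all --- it is imported verbatim as a black-box citation to Cygan et al., so there is no in-paper proof to compare against. Your proposal does capture the correct quantitative skeleton of the known argument: partitioning the $N$ variables into $t = \Theta(N/\log N)$ blocks of $\Theta(\log N)$ variables each, so that a single vertex choice among $\Theta(t)$ options encodes a block assignment and a $2^{o(n_G \log n_G)}$ algorithm would yield a $2^{o(t \log t)} = 2^{o(N)}$ algorithm for \textsc{$3$-SAT}, contradicting the ETH (via sparsification). That counting frame is genuinely the right one and matches the spirit of the cited work.

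However, the proposal is not a proof; it explicitly defers the entire technical content (``granting such a construction''), and the two gadget ideas you do commit to have concrete problems. First, ``pendant cliques of pairwise distinct sizes'' as identifiers for $t$ block-representatives require sizes $1,2,\ldots,t$ and hence $\Theta(t^2)$ vertices in total, which destroys the bound $n_G = \Theta(N/\log N)$ that the whole argument hinges on; distinguishing $t$ vertices rigidly within a linear-size vertex budget is precisely the hard part. Second, the injectivity issue you flag at the end is not a detail to be ``taken care of'' but the crux: if each block independently selects one of $\Theta(t)$ shared assignment-vertices, two blocks wanting the same index collide, and giving each block a private copy of the assignment vertices costs $\Theta(t^2)$ vertices again. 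The cited proof resolves this by routing through intermediate problems whose solutions are naturally permutation-structured (so that injectivity is an asset rather than an obstruction, and the $t! = 2^{\Theta(t\log t)}$ count arises intrinsically), together with a careful homomorphism-to-subgraph-isomorphism step --- none of which is reconstructed here. As it stands, the proposal is a plausible plan with the correct parameter accounting, but the steps that make the lemma true are missing, and one of the proposed gadgets would in fact invalidate the size bound.
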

\section{A kernelization and FPT algorithm for rejection-proof packings} \label{sec:kernel}

In this section, we present a polynomial kernel and a derived FPT algorithm for RP-$d$-SP, which translates directly to RPKE-$d$ as explained in \autoref{sec:preliminaries}.
We start with two simple statements that are useful in proving the correctness of our kernel.
\begin{lemma}\label{lm: removing not used sets preserves solution}
    Let $I_1 = (U_1,\dots,U_p,\mathcal{S},k)$ be an RP-$d$-SP instance, let $\mathcal{X} \subseteq \mathcal{S}$ be a solution for it and let $X \in \mathcal{S}$. If $X \notin \mathcal{X}$, then $\mathcal{S}$ is a solution for the  instance $I_2 = (U_1,\dots,U_p,\mathcal{S} \setminus \{X\},k)$.
\end{lemma}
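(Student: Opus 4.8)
The plan is to verify directly that $\X$ satisfies both conditions required of a solution to $I_2$: that it is a candidate solution for $I_2$ (a packing of pairwise disjoint sets, all drawn from $\S \setminus \{X\}$, covering at least $k$ elements of $U$), and that it is rejection-proof in $I_2$.

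The first condition is immediate. Since $\X$ is a solution for $I_1$, it is in particular a packing of pairwise disjoint members of $\S$ covering at least $k$ elements of $U$. Shrinking the ambient collection affects neither the disjointness of the sets in $\X$ nor the number of elements they cover, so the only thing to check is that every set of $\X$ still lies in $\S \setminus \{X\}$ --- which is exactly the hypothesis $X \notin \X$.

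For the second condition I would argue by contraposition, using a monotonicity of rejections under deletion of sets. Suppose toward a contradiction that some agent $i$ rejects $\X$ in the instance $I_2$. By \autoref{def:set-rejection} there are a subset $\Xrej \subseteq \X$ and a set $\Xint$ of $i$-internal sets with $\Xint \subseteq \S \setminus \{X\}$ such that $\X' := (\X \setminus \Xrej) \cup \Xint$ is a collection of pairwise disjoint sets covering more elements of $U_i$ than $\X$ does. As $\S \setminus \{X\} \subseteq \S$, the sets $\Xrej$ and $\Xint$ are equally admissible in $I_1$, so they witness that agent $i$ rejects $\X$ in $I_1$ as well. This contradicts the assumption that $\X$ is rejection-proof in $I_1$; hence no agent rejects $\X$ in $I_2$, which completes the argument.

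I do not expect any genuine obstacle: the statement is a monotonicity observation saying that removing a set from $\S$ can only take away options for agents to form alternative packings, and can never turn a previously non-rejecting agent into a rejecting one. The one point worth stating carefully is that the deleted set $X$ is irrelevant precisely because $X \notin \X$ (so it is not part of the packing under scrutiny) and $X \notin \S \setminus \{X\}$ (so it cannot be used in any alternative packing in $I_2$ either).
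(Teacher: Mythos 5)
Your proof is correct and follows essentially the same route as the paper's: observe that $\X$ remains a candidate solution since $X \notin \X$, and that any alternative packing witnessing a rejection in $I_2$ would equally witness a rejection in $I_1$ because $\S \setminus \{X\} \subseteq \S$. Your write-up is merely more explicit than the paper's three-line argument.
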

\begin{proof}
    Clearly, $\mathcal{X}$ is also a candidate solution in $I_2$. Now, suppose for contradiction that some agent $i$ rejects $\mathcal{X}$ by proposing an alternative packing $(\mathcal{X} \setminus \Xrej) \cup \Xint$. This alternative packing can also be constructed in $I_1$, contradicting that $\mathcal{X}$ is rejection-proof in $I_1$.
\end{proof}



\begin{observation}\label{ob: idential collections rejection iff}
    Let $S_i = \bigcup_{S \in \S[U_i]} S$ denote the elements which are contained in $i$-internal sets for some agent $i \in [p]$.
    If for collections $\X$ and $\X'$ it holds that $\{X \in \X \mid X \cap S_i \neq \emptyset \}=\{X' \in \X' \mid X' \cap S_i  \neq \emptyset \}$, then $i$ rejects $\X$ if and only if $i$ rejects $\X'$. 
\end{observation}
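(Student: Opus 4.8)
The plan is to isolate, for a single agent, the part of a packing that matters for its rejection. Throughout, fix the agent $i$; as rejection is only defined for pairwise disjoint packings we take $\X$ and $\X'$ to be such packings, and for a pairwise disjoint collection $\mathcal Z \subseteq \S$ write $V(\mathcal Z) := \bigcup_{Z \in \mathcal Z} Z$ for the covered elements. Set $\mathcal Y := \{X \in \X \mid X \cap S_i \neq \emptyset\}$, which by the hypothesis of the observation equals $\{X' \in \X' \mid X' \cap S_i \neq \emptyset\}$; in particular $\mathcal Y \subseteq \X$ and $\mathcal Y \subseteq \X'$. I would first record two elementary facts: (i) every $i$-internal set $C$ — hence every set that can appear in any candidate $\Xint$ — is contained in $S_i$, directly from $S_i = \bigcup_{S \in \S[U_i]} S$; and (ii) for a pairwise disjoint packing $\mathcal Z$ one has $|V(\mathcal Z) \cap U_i| = \sum_{Z \in \mathcal Z} |Z \cap U_i|$, so the relevant cardinalities are additive over the sets of the packing.

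Next I would rewrite the rejection condition in a localized form. If agent $i$ rejects $\X$ via $\Xrej \subseteq \X$ and an $i$-internal family $\Xint$, then $\X'' := (\X \setminus \Xrej) \cup \Xint$ is pairwise disjoint, and using (ii) together with this disjointness and the fact that $V(\Xint) \subseteq U_i$, the requirement that $\X''$ covers more of $U_i$ than $\X$ simplifies to $|V(\Xint)| > |V(\Xrej) \cap U_i|$ — an inequality that no longer mentions the sets of $\X$ outside $\Xrej$. I would then argue we may assume $\Xrej \subseteq \mathcal Y$: any $X \in \Xrej$ with $X \cap S_i = \emptyset$ is, by (i), disjoint from all sets of $\Xint$, so moving it back out of $\Xrej$ keeps $\X''$ pairwise disjoint while not increasing $|V(\Xrej) \cap U_i|$, hence the localized inequality still holds. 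Iterating discards every $S_i$-avoiding set from $\Xrej$.

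Finally I would transfer a normalized rejection of $\X$ to one of $\X'$. Given $\Xrej \subseteq \mathcal Y \subseteq \X'$ and an $i$-internal family $\Xint$ satisfying the localized inequality, I claim $\X''' := (\X' \setminus \Xrej) \cup \Xint$ witnesses that $i$ rejects $\X'$. For pairwise disjointness: the sets of $\X' \setminus \Xrej$ are mutually disjoint (being in $\X'$), the sets of $\Xint$ are mutually disjoint (as in $\X''$), and for a cross pair $X' \in \X' \setminus \Xrej$ and $C \in \Xint$, either $X' \cap S_i = \emptyset$ so $X' \cap C = \emptyset$ by (i), or $X' \cap S_i \neq \emptyset$ so $X' \in \mathcal Y \setminus \Xrej \subseteq \X \setminus \Xrej$ and hence $X' \cap C = \emptyset$ because $\X''$ was pairwise disjoint. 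For the count: (ii) gives $|V(\X''') \cap U_i| - |V(\X') \cap U_i| = |V(\Xint)| - |V(\Xrej) \cap U_i| > 0$. The reverse implication follows by the same argument with $\X$ and $\X'$ interchanged, which is legitimate since $\mathcal Y$ is defined symmetrically in them.

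The only genuinely delicate point is the normalization step: one must check that stripping the $S_i$-avoiding sets out of $\Xrej$ simultaneously preserves pairwise disjointness of the alternative packing and the strict gain in covered $U_i$-elements. Once that is done the transfer is pure bookkeeping, and the remainder is routine set algebra — provided one consistently distinguishes "elements of $U_i$" from "elements of $S_i$" and keeps track of which unions are actually disjoint.
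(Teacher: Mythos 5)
Your proof is correct and takes essentially the same route as the paper's brief justification: normalize the rejection so that only sets meeting $S_i$ are rejected, then transfer it between $\X$ and $\X'$ using the hypothesis that these sets coincide. Your localized inequality (the gain of $\Xint$ over $U_i$ exceeds $|U_i \cap \bigcup \Xrej|$) makes explicit a point the paper glosses over, namely that sets avoiding $S_i$ may still contain elements of $U_i$, so the improvement must be measured locally on $\Xrej$ and $\Xint$ rather than by comparing the total $U_i$-coverage of the two packings.
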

The above observation captures that an agent $i$ cannot gain anything by rejecting a set that is not intersected by an $i$-internal set. Thus, we can assume they will not reject such sets. $\X$ and $\X'$ become identical after removing these sets, so the observation is correct. \par
Now, we present two reduction rules that act on an input instance~$(U_1,\dots,U_p,\mathcal{S},k)$ of RP-$d$-SP and prove that they are safe. A reduction rule is \emph{safe} if its application preserves the YES/NO answer to the decision problem. The first rule aims to reduce the number of internal sets.

\begin{rrule}\label{rr: internal sets}
    Let $\mathcal{F} \subseteq \mathcal{S}[U_i]$, for some $i \in [p]$, be a sunflower of size $d(k\cdot d-1)+2$.
    Remove a set $F \in \mathcal{F}$ of minimum size.
\end{rrule}

\begin{lemma}\label{lm: rr internal safe}
    \autoref{rr: internal sets} is safe if $\mathcal{S}$ has a hitting set of size  $\le k\cdot d$.
\end{lemma}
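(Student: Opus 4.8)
The plan is to establish that the instance before the rule, $I_1 = (U_1,\dots,U_p,\mathcal{S},k)$, is a YES-instance if and only if the instance after it, $I_2 = (U_1,\dots,U_p,\mathcal{S}\setminus\{F\},k)$, is. In both implications the workhorse is \autoref{lm: constructing a different solution}, invoked with hitting-set bound $h := k\cdot d$: since $\mathcal{F} \subseteq \mathcal{S}[U_i]$ is a sunflower of size $d(k\cdot d - 1)+2 = d(h-1)+2$, whenever a disjoint subcollection of $\mathcal{S}$ uses $F$ we may replace $F$ by another petal of $\mathcal{F}$ while preserving disjointness; and because $F$ was picked of minimum size in $\mathcal{F}$, no such replacement decreases the number of covered elements. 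As both $F$ and the replacing petal lie entirely inside $U_i$, the same holds for the number of covered elements of $U_i$.

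For the direction $I_1 \Rightarrow I_2$ I would take a solution $\mathcal{X}$ of $I_1$. If $F \notin \mathcal{X}$, then \autoref{lm: removing not used sets preserves solution} already gives that $\mathcal{X}$ is a solution of $I_2$. Otherwise $F \in \mathcal{X}$, and \autoref{lm: constructing a different solution} yields $F' \in \mathcal{F}\setminus\{F\}$ with $\mathcal{X}' := (\mathcal{X}\setminus\{F\})\cup\{F'\}$ again disjoint; since $|F'| \geq |F|$ it still covers at least $k$ elements and uses only sets of $\mathcal{S}\setminus\{F\}$, so it is a candidate solution of $I_2$. The remaining task is to show $\mathcal{X}'$ is rejection-proof in $I_2$. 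Supposing agent $j$ rejects it, I split on whether $j = i$. If $j \neq i$, then $F$ and $F'$ are disjoint from $U_j$ and hence from $S_j$, so $\mathcal{X}$ and $\mathcal{X}'$ contain exactly the same sets meeting $S_j$; \autoref{ob: idential collections rejection iff} then turns the rejection of $\mathcal{X}'$ into a rejection of $\mathcal{X}$, which is valid in $I_1$ as well, contradicting that $\mathcal{X}$ solves $I_1$. If $j = i$, I take agent $i$'s alternative packing $\mathcal{X}'' = (\mathcal{X}'\setminus\mathcal{X}'_{\mathrm{rej}})\cup\mathcal{X}'_{\mathrm{int}}$ and build, by substituting $F'$ back by $F$ among the rejected sets or the added internal sets (a short case analysis on whether $F' \in \mathcal{X}'_{\mathrm{rej}}$), a subset of $\mathcal{X}$ to reject and a family of $i$-internal sets to add so that the resulting packing equals $\mathcal{X}''$ exactly. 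Since $\mathcal{X}''$ covers more of $U_i$ than $\mathcal{X}'$, and $\mathcal{X}'$ covers at least as much of $U_i$ as $\mathcal{X}$ (because $|F'|\geq|F|$ and $F,F'\subseteq U_i$), this exhibits a rejection of $\mathcal{X}$ in $I_1$ — again a contradiction. Hence $\mathcal{X}'$ is a solution of $I_2$.

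For the direction $I_2 \Rightarrow I_1$ I would take a solution $\mathcal{X}$ of $I_2$ and argue that $\mathcal{X}$ itself solves $I_1$; it is trivially still a candidate solution there. The only thing that could go wrong is that some agent rejects $\mathcal{X}$ in $I_1$ by exploiting the re-inserted set $F$. Since $F \notin \mathcal{X}$, $F$ cannot be among the rejected sets, and since $F$ is $i$-internal it can only appear as an added internal set of agent $i$. So suppose agent $i$ rejects $\mathcal{X}$ in $I_1$ with $F$ among its added internal sets, giving an alternative packing $\mathcal{X}'$ which is a disjoint subcollection of $\mathcal{S}$ containing $F$. Applying \autoref{lm: constructing a different solution} to $\mathcal{X}'$ and $\mathcal{F}$ replaces $F$ by some $F'' \in \mathcal{F}\setminus\{F\}$, and because $F \notin \mathcal{X}$ the resulting packing is obtained from $\mathcal{X}$ by the same set of rejected sets and an updated family of $i$-internal sets that avoids $F$; as $|F''|\geq|F|$ and $F,F''\subseteq U_i$ it covers at least as much of $U_i$ as $\mathcal{X}'$, hence strictly more than $\mathcal{X}$. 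This is a rejection of $\mathcal{X}$ already available in $I_2$, contradicting that $\mathcal{X}$ solves $I_2$.

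I expect the main obstacle to be the bookkeeping in the $j = i$ case of the first direction: one has to keep precise track of whether the substituted petal $F'$ sits among the sets agent $i$ rejects in its alternative packing for $\mathcal{X}'$, and choose the rejected and added sets for $\mathcal{X}$ accordingly, so that the two alternative packings are literally the same collection. Once that identification is in place, the coverage comparison is immediate from the facts that $F$ and $F'$ are entirely contained in $U_i$ and that $|F'|\geq|F|$. All remaining steps are routine unfoldings of \autoref{def:set-rejection} and of the definition of a disjoint packing, so I would keep them brief.
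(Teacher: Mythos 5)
Your proposal is correct and follows essentially the same route as the paper's proof: the forward direction swaps $F$ for another petal via \autoref{lm: constructing a different solution} with $h = k\cdot d$ and then handles rejections by the case split $j\neq i$ (via \autoref{ob: idential collections rejection iff}) versus $j=i$ with the sub-cases $F' \in \X_{\operatorname{rej}}'$ or not, while the backward direction uses the sunflower again to replace $F$ inside a hypothetical $\X_{\operatorname{int}}$ by a surviving petal of at least the same size. The bookkeeping you flag as the main obstacle is exactly what the paper's three-case analysis carries out.
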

\begin{proof}
    Let $I=(U_1,\dots,U_p,\mathcal{S},k)$ be an instance of RP-$d$-SP where there exists some $i \in [p]$ and sunflower $\mathcal{F} \subseteq \mathcal{S}[U_i]$ of size $d(k\cdot d-1)+2$ with core $Y$. Let $S$ be the set of minimum size in $\mathcal{F}$ that is removed by \autoref{rr: internal sets} to create the instance $I'=(U_1,\dots,U_p,\mathcal{S} \setminus \{S\},k)$. 
    We will now argue that $I$ is a YES-instance if and only if $I'$ is a YES-instance.\par
    ($\Rightarrow$)
    Assume $I$ has a solution $\mathcal{X}$. We claim there is a solution for $I$ that does not use $S$. If $S \not \in \mathcal{X}$, our claim holds. Suppose $S \in \mathcal{X}$.
    Invoking \autoref{lm: constructing a different solution} with $h = k\cdot d$ yields that a $S' \in \mathcal{F}$ exists such that the sets in $\X' = (\mathcal{X} \setminus \{S\}) \cup \{S'\}$ are pairwise disjoint.
    Since $S$ is a minimum-size set in $\mathcal{F}$, it follows that $|S'| \ge |S|$, so the collection $\mathcal{X}'$ covers at least~$k$ elements. 
    To show that $\X'$ is rejection-proof, suppose for contradiction that some agent $j$ rejects $\mathcal{X}'$ by proposing the alternate packing $(\X' \setminus \Xrej') \cup \Xint'$. We distinguish three cases: 
    \begin{itemize}
        \item $j\neq i$. 
        Since~$S \in \mathcal{S}[U_i]$, each set $J \in \mathcal{S}$ with $J \cap U_j \neq \emptyset$ is contained in $\mathcal{X}'$ if and only if it is contained in $\mathcal{X}$. From \autoref{ob: idential collections rejection iff} it follows that agent $j$ would reject $\mathcal{X}$.  
        \item $j=i$ and $S' \in \X_{\operatorname{rej}}'$. Let $\X_{\operatorname{rej}}=(\X'_{\operatorname{rej}} \setminus \{S'\}) \cup \{S\}$ and note that $\X' \setminus \X'_{\operatorname{rej}} = \X\setminus \X_{\operatorname{rej}}$. By the definition of rejection  $(\mathcal{X}\setminus \mathcal{X}_{\operatorname{rej}}) \cup \X'_{\operatorname{int}}$ is a collection of disjoint sets.
        Both $S$ and $S'$ are $j$-internal and $|S| \le |S'|$, hence $\X$ covers at most as many elements of $U_j$ as $\X'$ does.  
        By definition of rejection, $ (\X' \setminus \X'_{\operatorname{rej}} ) \cup \X_{\operatorname{int}}'= (\mathcal{X}\setminus \mathcal{X}_{\operatorname{rej}}) \cup \X'_{\operatorname{int}}$ covers more elements from $U_j$ than $\mathcal{X}'$ does, thus also more than $\mathcal{X}$ covers. This would mean that $\mathcal{X}$ would be rejected by agent $j$ by rejecting the sets $\X_{\operatorname{rej}} \subseteq \X$ and replacing these with the sets $\X_{\operatorname{int}}'$.
        \item $j=i$ and $S' \not \in \X_{\operatorname{rej}}'$. Let $\mathcal{X}_{\operatorname{rej}}= \mathcal{X}_{\operatorname{rej}}' \cup \{S\}$ and $\mathcal{X}_{\operatorname{int}} = \mathcal{X}_{\operatorname{int}}' \cup \{S'\}$. Observe that $\mathcal{X}_{\operatorname{int}} \subseteq \mathcal{S}[U_j]$ and $(\X \setminus \X_{\operatorname{rej}}) \cup \X_{\operatorname{int}} = (\X' \setminus \X'_{\operatorname{rej}}) \cup \X'_{\operatorname{int}}$. Hence agent $j$ would reject $\X$.
    \end{itemize}
    As all cases contradict the assumption that $\X$ is rejection-proof, $\mathcal{X}'$ is a solution for $I$ that does not contain $S$. Now \autoref{lm: removing not used sets preserves solution} implies that  $I'$ has a solution and is a YES-instance.\par
    ($\Leftarrow$)
    Assume $I'$ has a solution $\mathcal{X}'$. We claim that $\X'$ is also a solution for the instance $I$. It suffices to show that no rejection can take place where $S \in \X_{\operatorname{int}}$, as the other requirements for $\mathcal{X}'$ to be a solution for $I$ follow directly from $\mathcal{X}'$ being a solution for $I'$.\par
    Assume, for sake of contradiction, that agent $i$ rejects $\X'$ by rejecting $\X_{\operatorname{rej}}$ and replacing these sets with $\X_{\operatorname{int}}$, where $S \in \X_{\operatorname{int}}$. Because $(\X' \setminus \X_{\operatorname{rej}}) \cap \X_{\operatorname{int}}$ is a collection of disjoint sets and $S \in \mathcal{F}$ it follows from \autoref{lm: constructing a different solution} with $h = k\cdot d$ that there exists some set $S' \in \mathcal{F}$ such that $(((\X' \setminus \X_{\operatorname{rej}}) \cap \X_{\operatorname{int}}) \setminus \{S\}) \cup  \{S'\} $ is a collection of disjoint sets.     
    Note that $|S'| \ge |S|$ and $S,S' \in \mathcal{S}[U_i]$. Hence agent $i$ can also reject $\mathcal{X}'$ as a solution for $I$ by rejecting $\X_{\operatorname{rej}}$ and replacing these sets with $(\X_{\operatorname{int}} \setminus \{S\}) \cup \{S'\}$. Similarly, agent $i$ can reject $\X'$ as a solution for $I'$; a contradiction. Hence $\X'$ is a solution for $I$, which means $I$ is a YES-instance. 
\end{proof}

Observe that in the forward direction of the proof we increase the number of sets that are rejected by some agent $j$. This is possible because an agent can reject an unbounded number of sets in RPKE-$d$. The proof as written does not work in the setting of $c$-RPKE-$d$.
Our second reduction rule aims to reduce the number of sets that are not internal.

\begin{rrule}\label{rr: external sets}
    Let $\mathcal{F} \subseteq \mathcal{S}$ be a sunflower with core $Y$ of size $d(k\cdot d-1)+2$ such that:
    \begin{itemize}
        \item No petal $F \setminus Y$, for $F \in \mathcal{F}$, is intersected by an internal set.
        \item For each agent $i$ that has an internal set that intersects $Y$, for all $F,F' \in \mathcal{F}$:
        \begin{equation*}|F \cap U_i| = |F' \cap U_i|.\end{equation*}
    \end{itemize}
    Remove a set $F \in \mathcal{F}$ of minimum size.
\end{rrule}

\begin{restatable}{lemma}{rrexternalsets}\label{lm: rr external safe}
    $(\bigstar)$ \autoref{rr: external sets} is safe if $\mathcal{S}$ has a hitting set of size  $\le k\cdot d$.
\end{restatable}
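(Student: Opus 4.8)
The plan is to mirror the proof of \autoref{lm: rr internal safe}, adapting it to handle the subtleties that arise because the sunflower $\mathcal{F}$ in \autoref{rr: external sets} now consists of possibly non-internal sets. Let $I = (U_1,\dots,U_p,\mathcal{S},k)$, let $S$ be the minimum-size set of $\mathcal{F}$ removed by the rule, and let $I' = (U_1,\dots,U_p,\mathcal{S}\setminus\{S\},k)$. As before, we must show $I$ is a YES-instance iff $I'$ is. The key structural fact we get for free from \autoref{lm: constructing a different solution} (with $h = k\cdot d$, using the hitting-set bound and $|\mathcal{F}| = d(kd-1)+2$) is: whenever a disjoint collection $\mathcal{Y}$ contains some set of $\mathcal{F}$, we can swap that set for another member of $\mathcal{F}$ and remain disjoint. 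The two side conditions of \autoref{rr: external sets} are precisely what we will exploit to argue such a swap does not disturb rejections: since no petal is touched by an internal set, swapping petals is invisible to any agent that only manipulates internal sets; and since every agent $i$ with an internal set intersecting the core has $|F\cap U_i|$ constant over $F\in\mathcal{F}$, swapping one member of $\mathcal{F}$ for another does not change how many elements of $U_i$ the candidate solution covers, nor (combined with the petal condition) which elements of $U_i\cap S_i$ are covered.

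For the forward direction, assume $\mathcal{X}$ is a solution for $I$; if $S\notin\mathcal{X}$ we are done by \autoref{lm: removing not used sets preserves solution}, so suppose $S\in\mathcal{X}$. Apply \autoref{lm: constructing a different solution} to obtain $S'\in\mathcal{F}$ with $\mathcal{X}' = (\mathcal{X}\setminus\{S\})\cup\{S'\}$ disjoint; minimality of $S$ gives $|S'|\ge|S|$, so $\mathcal{X}'$ still covers $\ge k$ elements. To show $\mathcal{X}'$ is rejection-proof, suppose agent $j$ rejects it via $(\mathcal{X}'\setminus\Xrej')\cup\Xint'$. The core of the argument is to transform this into a rejection of $\mathcal{X}$ in $I$. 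If $\Xint'$ and $\Xrej'$ involve neither $S$ nor $S'$, or if we can reroute them to avoid the petal $S'\setminus Y$, then by the petal condition and \autoref{ob: idential collections rejection iff} applied to agent $j$ — noting $\{X\in\mathcal{X} : X\cap S_j\neq\emptyset\}$ and $\{X\in\mathcal{X}' : X\cap S_j\neq\emptyset\}$ agree, because $S$ and $S'$ have the same intersection pattern with $U_j\cap S_j$ (from the petal and core-count conditions) — agent $j$ rejects $\mathcal{X}$ as well. The case analysis will branch on whether $S'\in\Xrej'$ (symmetric to the internal proof: put $S$ back in place of $S'$ in $\Xrej$) and, when $S'\notin\Xrej'$, on whether $\Xint'$ intersects $S'\setminus Y$; the petal condition forbids the latter whenever $\Xint'$ consists of internal sets, which it does. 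Each branch contradicts rejection-proofness of $\mathcal{X}$, so $\mathcal{X}'$ is a solution for $I$ avoiding $S$, and \autoref{lm: removing not used sets preserves solution} finishes this direction.

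For the reverse direction, assume $\mathcal{X}'$ solves $I'$; it is a candidate solution for $I$, so it suffices to rule out a rejection in $I$ that uses $S$ as one of the newly inserted sets, i.e. $S\in\Xint$ for some agent $i$'s rejection $(\mathcal{X}'\setminus\Xrej)\cup\Xint$. But $\Xint$ consists of $i$-internal sets, whereas $S\in\mathcal{F}$ and the first side condition of \autoref{rr: external sets} guarantees the petal $S\setminus Y$ is not intersected by any internal set — in particular $S\setminus Y$ is nonempty (petals are nonempty by definition) and not $i$-internal, so $S$ itself is not $i$-internal, contradicting $S\in\Xint\subseteq\mathcal{S}[U_i]$. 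Hence no such rejection exists, and $\mathcal{X}'$ is already rejection-proof in $I$.

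The main obstacle I anticipate is the forward direction's case where $j = i$ is an agent with an internal set intersecting the core $Y$ and the proposed alternative packing $\Xint'$ or $\Xrej'$ interacts with the swapped set $S'$: one must carefully verify that translating the rejection back to $\mathcal{X}$ (by substituting $S$ for $S'$, or by using \autoref{lm: constructing a different solution} a second time to reroute a set of $\Xint'\cup\Xrej'$ off the petal of $S'$) preserves both disjointness and the strict gain in $|U_i|$-coverage; this is exactly where the equal-core-intersection condition $|F\cap U_i| = |F'\cap U_i|$ does the work, ensuring the coverage count of $U_i$ is unchanged by the swap so that a strict improvement over $\mathcal{X}'$ is also a strict improvement over $\mathcal{X}$. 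The bookkeeping is more delicate than in \autoref{lm: rr internal safe} because $S$ and $S'$ now straddle multiple agents, so the disjointness and counting arguments must be made agent-by-agent rather than globally; handling this cleanly is the crux of the proof, which is presumably why the statement is deferred to the appendix.
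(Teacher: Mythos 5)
Your proposal matches the paper's proof in all essentials: the forward direction swaps $S$ for a sunflower-mate $S'$ via \autoref{lm: constructing a different solution}, splits on whether the rejecting agent has an internal set meeting the core $Y$ (using \autoref{ob: idential collections rejection iff} when not, and the equal-cardinality condition $|F\cap U_i|=|F'\cap U_i|$ plus the untouched-petal condition when so, sub-casing on $S'\in\Xrej'$), and the reverse direction observes that $S$ cannot be internal and hence cannot appear in any $\Xint$. The only looseness is your mid-proof appeal to \autoref{ob: idential collections rejection iff} for agents whose internal sets do meet the core (where the collections intersecting $S_j$ genuinely differ and the observation does not apply directly), but your closing paragraph correctly identifies that this case is instead carried by the cardinality condition, exactly as in the paper.
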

The proof of the lemma follows the same structure as the proof of \autoref{lm: rr internal safe}. The restrictions placed on the sunflower used in \autoref{rr: external sets} together with \autoref{lm: removing not used sets preserves solution} can be used to prove one direction of the proof. The other direction follows from the fact that any removed set is not internal and can thus not be used as a replacement set in a rejection.


To bound the kernel size we define $f_d(k)=d \cdot d!(d(k\cdot d-1)+1)^d$ and $g_d(k) = {d \cdot d! \left( d^{d-1} \cdot (d(k\cdot d-1)+1) +  d\cdot  k \cdot f_d(k) \right)^d}$. Note that $g_d(k) = \Oh \left( \left(k^{d+1} \right)^d \right)$ for fixed~$d$.

\begin{restatable}{lemma}{rrapplicable} \label{lem:reduction-rule-applicable}
     $(\bigstar)$ If for an instance $I=(U_1,\dots,U_p,\mathcal{S},k)$ of RP-$d$-SP it holds that $|\mathcal{S}| \ge g_d(k)$, then one of the reduction rules can be applied to $I$ in polynomial time, or it can be concluded in polynomial time that $I$ is a YES-instance.
\end{restatable}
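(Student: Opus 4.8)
The plan is to show that an instance with many sets must contain a sunflower meeting the structural requirements of one of the two reduction rules, unless we can already certify a YES-answer via a hitting-set/independent-set argument. The guiding principle behind both reduction rules is that they are only proven safe (Lemmas~\ref{lm: rr internal safe} and~\ref{lm: rr external safe}) under the hypothesis that $\mathcal{S}$ has a hitting set of size $\le k\cdot d$; so the first step is to dispose of the case where no such small hitting set exists. In that case I claim $I$ is trivially a YES-instance: if $\mathcal{S}$ has no hitting set of size $\le k\cdot d$, then greedily picking a maximal packing of disjoint sets cannot get stuck after fewer than $k$ elements are covered, because the elements covered so far would otherwise form a hitting set that is too large only if the packing is large. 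More precisely, a maximal disjoint packing $\mathcal{X}$ has the property that every set in $\mathcal{S}$ is hit by $\bigcup \mathcal{X}$; if $\mathcal{X}$ covers fewer than $k$ elements then $\bigcup\mathcal{X}$ is a hitting set of size $<k \le k\cdot d$, contradiction. So $\mathcal{X}$ covers at least $k$ elements. But we also need it rejection-proof; here I would invoke that a packing covering all of $U$ restricted to the relevant elements, or simply the empty packing together with internal structure, is never rejected — actually the cleanest route is: if no small hitting set exists, pick a maximal packing consisting \emph{only of internal sets within a single agent whenever possible}, or more carefully, observe that we may take $\mathcal{X}$ to be a maximal packing that for every agent $i$ already covers a maximum number of elements of $U_i$ achievable by $i$-internal sets disjoint from the rest; such a packing cannot be rejected. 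This part needs a short separate argument and I expect it to require care about exactly which maximal packing to choose so that rejection-proofness comes for free.

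\textbf{Assuming a hitting set of size $\le h := k\cdot d$ exists}, the core combinatorial step is a counting argument. Partition $\mathcal{S}$ as $\mathcal{S} = \mathcal{S}_{\mathrm{int}} \cup \mathcal{S}_{\mathrm{ext}}$ where $\mathcal{S}_{\mathrm{int}}$ is the union of $\mathcal{S}[U_i]$ over all agents and $\mathcal{S}_{\mathrm{ext}}$ is the rest. If $|\mathcal{S}_{\mathrm{int}}| \ge f_d(k) = d\cdot d!(d(k\cdot d-1)+1)^d$ then by the Sunflower Lemma (Lemma~\ref{lm:sunflower}) applied inside the family $\mathcal{S}[U_i]$ for the agent $i$ achieving the maximum (which still has at least $f_d(k)/p$ sets — careful, need the bound per agent), we get a sunflower of size $d(k\cdot d-1)+2$ lying entirely in some $\mathcal{S}[U_i]$, which is exactly what Reduction~Rule~\ref{rr: internal sets} requires. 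I need to be slightly careful: it suffices that \emph{some} agent has $\ge f_d(k)$ internal sets, so the case split should be ``either some agent has $\ge f_d(k)$ internal sets (apply Rule~\ref{rr: internal sets}), or every agent has $< f_d(k)$ internal sets.'' In the latter subcase, the total number of internal sets is not directly bounded (there could be many agents), so instead I bound the number of elements of $U$ that lie in internal sets of ``small'' agents: each such element lies in at most $f_d(k)$ internal sets only of its own agent — rather, the useful bound is on how many distinct petals internal sets can occupy. This is the step I expect to be the main obstacle: correctly accounting for the interplay between the two rules' structural side-conditions.

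\textbf{For the external sets}, suppose $|\mathcal{S}_{\mathrm{ext}}| \ge g_d(k) - (\text{internal contribution})$, which is guaranteed to be large when $|\mathcal{S}| \ge g_d(k)$ and the internal count is controlled as above. First extract, via Lemma~\ref{lm:sunflower}, a large sunflower $\mathcal{F}_0 \subseteq \mathcal{S}_{\mathrm{ext}}$ with some core $Y_0$; its size can be taken to be a large multiple of $d(k\cdot d -1)+2$. Now I must shrink $\mathcal{F}_0$ to a sub-sunflower $\mathcal{F}$ (with the same core, or a refined core) satisfying the two bullets of Reduction~Rule~\ref{rr: external sets}: (i) no petal is intersected by an internal set, and (ii) for each agent $i$ owning an internal set meeting $Y$, all members of $\mathcal{F}$ agree on $|F\cap U_i|$. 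For (i): the set of elements appearing in internal sets has bounded size — here is exactly where the ``$<f_d(k)$ internal sets per agent'' assumption plus the hitting-set bound bites, giving a bound of roughly $h \cdot d \cdot f_d(k)$ or so on the number of such ``internal-touched'' elements; petals are disjoint, so at most that many petals of $\mathcal{F}_0$ can be spoiled, and we discard them, which is why $g_d(k)$ has the additive $d\cdot k\cdot f_d(k)$ term inside the $d$-th power. For (ii): for each of the at most $h$ agents whose internal sets meet $Y$, the quantity $|F\cap U_i| \in \{0,1,\dots,d\}$, so partitioning $\mathcal{F}_0$ by the vector of these at most $h$ values costs a factor of at most $(d+1)^{h}$ — but that is not polynomial in $k$, so this cannot be the right accounting. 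Instead the correct observation is that $Y$ itself has size $\le d$, so at most $d$ agents can own an internal set meeting $Y$ (each such agent must have an element of $Y$ in $U_i$... no—internal sets need not lie in $Y$). Let me reconsider: the condition only concerns agents with an internal set intersecting $Y$; such an internal set contains an element of $Y$, and $|Y|\le d$, so there are at most $d$ relevant agents; refining by their $\le d$ values costs at most $(d+1)^d = d^{\Oh(d)}$, a constant for fixed $d$, absorbed into the leading $d\cdot d!$-type constants and the $d^{d-1}$ factor inside $g_d$. After these two prunings $\mathcal{F}$ still has size $\ge d(k\cdot d-1)+2$, so Rule~\ref{rr: external sets} applies. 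Finally, I would check that all of: computing a maximal packing, running the Sunflower Lemma (polynomial in $|U|,|\mathcal{S}|,z$ with $z = \Oh(k)$), and the prunings, take polynomial time, completing the proof.
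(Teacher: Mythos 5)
Your overall skeleton matches the paper's: extract a large sunflower, discard the petals touched by internal sets, refine the remainder by the vector of values $|F \cap U_i|$ for the at most $d-1$ agents whose internal sets meet the core (your observation that such an agent must own an element of $Y$ is the right one), and apply \autoref{rr: external sets}. However, there is a genuine gap at the point you yourself flag as ``the main obstacle'': bounding the \emph{total} number of internal sets when no single agent has $f_d(k)$ of them. The paper's resolution is an observation you are missing: if at least $k$ agents each own at least one internal set, then $I$ is a YES-instance, because the union over all agents of a maximum packing of $\S[U_i]$ is a packing of pairwise disjoint sets covering at least $k$ elements that is rejection-proof for free --- every agent already attains the maximum number of its own elements coverable by disjoint internal sets, and internal sets are all an agent can add. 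This immediately caps the number of agents with internal sets at $k-1$, hence the total number of internal sets at $k \cdot f_d(k)$, which is precisely the quantity the $d \cdot k \cdot f_d(k)$ term in $g_d(k)$ budgets for.

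Your substitute route --- first dispose of instances with no hitting set of size $\le k\cdot d$, then use the hitting set to bound the number of agents with internal sets --- could in principle be made to work, but as written it does not close. You explicitly leave open how to certify rejection-proofness of a large maximal packing in the no-small-hitting-set case (the paper does this elsewhere, in the kernel theorem, via an inclusion-wise maximal \emph{rejection-proof} packing and an exchange argument; note also that you would need $|\X| > k$, which follows from $d|\X| > kd$, not merely that $\X$ covers $\ge k$ elements). Moreover, your accounting gives $\le k\cdot d$ agents with internal sets rather than $< k$, so your bound on spoiled petals is $d^2 k f_d(k)$ rather than $d\, k\, f_d(k)$, and your $(d+1)^d$ equivalence classes exceed the $d^{d-1}$ in the definition of $g_d(k)$; both overshoots mean the proof as proposed does not establish the lemma for the stated $g_d(k)$, only for a larger (asymptotically equivalent) bound.
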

\begin{proof}[Proof sketch]
    It can easily be shown that an instance in which at least $k$ agents have an internal set is a YES-instance. Else, if the instance contains at least $k\cdot f_d(k)$ internal sets, we can use \autoref{lm:sunflower} to show that \autoref{rr: internal sets} can be applied in polynomial time. Otherwise, the number of internal sets is bounded, and we show that \autoref{rr: external sets} can be applied. In that case, the number of petals in a sunflower that are intersected by internal sets is also bounded and the core of a sunflower is intersected by internal sets from at most $d$ agents. Since $|\S| \geq g_d(k)$, \autoref{lm:sunflower} yields a sufficiently large sunflower such that some of its sets have the same number of elements from each agent with an internal set intersecting the core. Some subset of this is a sunflower meeting the prerequisites from \autoref{rr: external sets} and this can be found in polynomial time.
\end{proof}

Now, the above insights can be used to give the kernel. Given an RP-$d$-SP instance $I$, it starts by greedily computing an inclusion-wise maximal packing $\X$ of disjoint sets. In the appendix, we show that $I$ is a YES-instance if $|\X| > k$. Otherwise, the union of all sets in~$\X$ forms a hitting set of size $\leq k \cdot d$, meaning that Reduction rules~\ref{rr: internal sets} and~\ref{rr: external sets} are safe to apply. \autoref{lem:reduction-rule-applicable} shows that at least one of them is applicable if $|\mathcal{X}| \geq g_d(k)$ and we repeatedly apply one until an instance is obtained with at most $g_d(k)$ sets. This proves the following.

\begin{restatable}{theorem}{kernel}\label{thm:kernel}
    $(\bigstar)$ There is a polynomial-time algorithm that takes an RP-$d$-SP instance as input and either correctly determines that this is a YES-instance or returns an equivalent RP-$d$-SP instance with at most $g_d(k)$ sets and a hitting set of size at most $k\cdot d$. Hence, for constant $d$, RP-$d$-SP admits a kernel of size $g_d(k) = \Oh \left( \left(k^{d+1} \right)^d \right)$.
\end{restatable}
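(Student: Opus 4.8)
The plan is to stack the two reduction rules on top of a cheap hitting set obtained from a greedy packing, and to iterate until few sets remain. Given an instance $I=(U_1,\dots,U_p,\mathcal{S},k)$, I would first compute, in polynomial time, an inclusion-wise maximal packing $\mathcal{X}\subseteq\mathcal{S}$ of pairwise disjoint sets by repeatedly adding any set disjoint from those already chosen. If $|\mathcal{X}|>k$, the algorithm reports that $I$ is a YES-instance; this is correct by the separately established fact (deferred to the appendix) that a maximal packing with more than $k$ sets already forces a YES-instance. Otherwise $|\mathcal{X}|\le k$, so $H:=\bigcup_{X\in\mathcal{X}}X$ satisfies $|H|\le d\cdot k$, and by maximality of $\mathcal{X}$ every set of $\mathcal{S}$ meets $H$; hence $H$ is a hitting set of $\mathcal{S}$ of size at most $d\cdot k$, and \autoref{lm: rr internal safe} and \autoref{lm: rr external safe} then guarantee that \autoref{rr: internal sets} and \autoref{rr: external sets} are safe on $I$.

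Next I would repeat the following while $|\mathcal{S}|\ge g_d(k)$: invoke \autoref{lem:reduction-rule-applicable}, which in polynomial time either certifies that $I$ is a YES-instance---in which case the algorithm reports this and halts---or produces one of the two reduction rules together with a sunflower witnessing its applicability; then apply that rule, which deletes exactly one set of $\mathcal{S}$. Two invariants keep the loop correct. First, $k$ is never changed and $H$ remains a hitting set of size at most $d\cdot k$ of the current family, since a hitting set of a family is automatically a hitting set of every subfamily; hence by \autoref{lm: rr internal safe} and \autoref{lm: rr external safe} both rules stay safe at every iteration, so each step is equivalence-preserving and keeps the correctness of a YES-verdict. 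Second, each iteration strictly decreases $|\mathcal{S}|$, so the loop halts after at most $|\mathcal{S}|$ iterations, i.e.\ in polynomial time overall. When it stops, $|\mathcal{S}|<g_d(k)$ and we still hold a hitting set of size $\le d\cdot k$; discarding every element of $U$ that lies in no set of $\mathcal{S}$ preserves equivalence and the hitting set and leaves $|U|\le d\cdot g_d(k)$, so the output instance has total size $\Oh(d\cdot g_d(k))$.

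Since every step runs in polynomial time and preserves the answer, the algorithm either correctly declares $I$ a YES-instance or outputs an equivalent RP-$d$-SP instance with at most $g_d(k)$ sets and a hitting set of size at most $d\cdot k$; for constant $d$ this is $\Oh\!\left(\left(k^{d+1}\right)^{d}\right)$ by the stated estimate on $g_d(k)$, which is the claimed polynomial kernel. The proof is essentially an assembly of the earlier lemmas, so the point that needs genuine care is the loop invariant: one must check that the hitting-set bound---and therefore, through \autoref{lm: rr internal safe} and \autoref{lm: rr external safe}, the safety of both reduction rules---is not lost as sets are deleted (it is not, since we delete only from $\mathcal{S}$ and never touch $k$), and that the per-iteration progress guaranteed by \autoref{lem:reduction-rule-applicable} indeed drives $|\mathcal{S}|$ below $g_d(k)$ in polynomially many steps.
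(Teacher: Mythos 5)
Your overall architecture matches the paper's proof exactly: greedy maximal packing, the case split on $|\mathcal{X}|$ versus $k$, the hitting-set invariant of size $\le k\cdot d$ that keeps both reduction rules safe throughout the loop, and termination after polynomially many single-set deletions. That part is fine.

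The genuine gap is the case $|\mathcal{X}|>k$. You dismiss it by appealing to "the separately established fact (deferred to the appendix) that a maximal packing with more than $k$ sets already forces a YES-instance," but no such separate lemma exists in the paper --- that claim \emph{is} the substantive content of this theorem's proof (the paper explicitly says most of the proof is devoted to it). The claim is also not immediate: a disjoint packing with more than $k$ sets certainly covers more than $k$ elements, but it need not be rejection-proof, and a YES-instance requires a packing that is both large \emph{and} rejection-proof. The paper closes this by considering an inclusion-wise maximal \emph{rejection-proof} packing $\X^\ast$ (one exists, e.g.\ the union of maximum packings of each $(U_i,\S[U_i])$, greedily extended; extending a rejection-proof packing by a disjoint set preserves rejection-proofness, since any rejection of $\X^\ast\cup\{X\}$ via $\Xrej,\Xint$ yields a rejection of $\X^\ast$ via $\Xrej\setminus\{X\}$ and $\Xint$). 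If $\X^\ast$ covered fewer than $k$ elements, then since the more than $k$ pairwise disjoint sets of $\mathcal{X}$ can each be blocked only by a distinct covered element, some $X\in\mathcal{X}$ would be disjoint from all of $\X^\ast$, and $\X^\ast\cup\{X\}$ would be a strictly larger rejection-proof packing --- contradicting maximality. Hence $\X^\ast$ covers at least $k$ elements and the instance is a YES-instance. Without this argument (or an equivalent one), your proof does not justify the "correctly determines that this is a YES-instance" branch.
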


This kernel can be used as a subroutine to solve RP-$d$-SP. Since it outputs an equivalent instance with a bounded hitting set, the number of collections of pairwise disjoint sets is also bounded. This limits the number of candidate solutions and alternative packings that a brute-force algorithm needs to consider, and it allows us to prove the following statement.

\begin{restatable}{theorem}{bruteforce}\label{thm:fpt-algorithm}
    $(\bigstar)$ RP-$d$-SP is solvable in $2^{\Oh(k \log k)} + (n+m)^{\Oh(1)}$ time for each constant~$d$.
\end{restatable}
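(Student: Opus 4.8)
The plan is to combine the kernelization of \autoref{thm:kernel} with exhaustive search on the kernel. First I would run the kernelization in $(n+m)^{\Oh(1)}$ time; it either reports a YES-instance (and we answer YES), or returns an equivalent instance $I' = (U_1', \dots, U_{p'}', \S', k)$ with $|\S'| \le g_d(k)$ and a hitting set of $\S'$ of size at most $k\cdot d$. Discarding universe elements that occur in no set of $\S'$ (and then discarding agents whose part became empty) leaves an equivalent instance with $|U'| \le d\cdot g_d(k)$, hence $p' \le d\cdot g_d(k)$. For constant $d$, every parameter describing $I'$ is polynomially bounded in $k$, so it suffices to decide $I'$ in $2^{\Oh(k\log k)}$ time. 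To have a hitting set explicitly in hand I would recompute one by greedily taking an inclusion-maximal disjoint subcollection $\X$ of $\S'$: if $|\X| > k$ the instance is a YES-instance (the argument used for the kernel), and otherwise $\bigcup\X$ is a hitting set of size $\le kd$.

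The key structural observation is that a hitting set $H$ with $|H|\le kd$ bounds the number of sets in any disjoint subcollection of $\S'$: every set meets $H$, and disjoint sets meet $H$ in pairwise disjoint nonempty subsets, so such a subcollection has at most $|H|\le kd$ sets. Consequently the number of candidate solutions --- disjoint subcollections of $\S'$ covering $\ge k$ elements --- is at most $\sum_{j\le kd}\binom{|\S'|}{j} \le (kd+1)\,g_d(k)^{kd}$, and since $g_d(k)=\Oh\!\left((k^{d+1})^{d}\right)$ for constant $d$ this is $2^{\Oh(k\log k)}$. I would enumerate all candidate solutions and return YES iff at least one of them is rejection-proof, which by definition decides $I'$.

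The heart of the matter is testing, for a fixed candidate solution $\X$ and a fixed agent $i$, whether $i$ rejects $\X$. Naively this means searching over all pairs $(\Xrej,\Xint)$ from \autoref{def:set-rejection}, and the optimal choice of $\Xint$ is a maximum-weight disjoint set-packing problem; this is the main obstacle, and I would resolve it using the hitting set twice. First, it suffices to enumerate $\Xint$ alone: for $\X':=(\X\setminus\Xrej)\cup\Xint$ to be pairwise disjoint, $\Xrej$ must contain every set of $\X$ meeting $\bigcup\Xint$, and taking $\Xrej$ to be exactly this set $R$ only enlarges $\bigcup(\X\setminus\Xrej)$ and hence the number of $U_i'$-elements covered by $\X'$; so if any $(\Xrej,\Xint)$ witnesses a rejection, then $(R,\Xint)$ does too. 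Second, every $i$-internal set lies in $\S'$ and meets $H\cap U_i'$, so a disjoint collection of $i$-internal sets has at most $|H\cap U_i'|\le kd$ sets, giving only $(kd+1)\,g_d(k)^{kd}=2^{\Oh(k\log k)}$ choices of $\Xint$. For each one I would form $R$ and $\X'$ and compare $|U_i'\cap\bigcup\X'|$ with $|U_i'\cap\bigcup\X|$ in polynomial time. Multiplying the $2^{\Oh(k\log k)}$ candidate solutions, the at most $d\cdot g_d(k)$ agents, the $2^{\Oh(k\log k)}$ choices of $\Xint$, and polynomial overhead, the search runs in $g_d(k)^{\Oh(kd)}\cdot k^{\Oh(1)}=2^{\Oh(k\log k)}$ time, which together with the kernelization yields the claimed $2^{\Oh(k\log k)}+(n+m)^{\Oh(1)}$ bound.
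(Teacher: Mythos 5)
Your proposal is correct and follows essentially the same route as the paper's proof: kernelize via \autoref{thm:kernel}, use the size-$kd$ hitting set to bound every disjoint subcollection (both candidate solutions and candidate $\Xint$) by $kd$ sets, enumerate both families of subcollections of the kernel's $g_d(k)$ sets, and canonically derive $\Xrej$ as the sets of $\X$ intersecting $\bigcup\Xint$, yielding $g_d(k)^{\Oh(kd)} = 2^{\Oh(k\log k)}$ overall. The only cosmetic differences are that you explicitly prune unused universe elements to bound the number of agents and loop over agents outside the $\Xint$ enumeration, whereas the paper folds the agent check into the inner loop; neither affects correctness or the running time.
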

\section{A tight ETH lower bound for rejection-proof kidney exchange}
\label{sec:eth-lower-bound}
We prove that the algorithm presented in \autoref{thm:fpt-algorithm} is asymptotically optimal to solve RP-$d$-SP under the ETH for any~$d \geq 3$. More strongly, we show that no faster algorithm exists for the problem RPKE-$3$ under the ETH, even when the input contains only two agents. To see that this is a stronger result, note that when~$d \geq 3$ the RP-$d$-SP problem strictly generalizes RPKE-$3$. The construction also yields a lower bound for $c$-RPKE-$3$ for every~$c \geq 2$. We use the ETH lower bound on \textsc{Subgraph Isomorphism} from \autoref{lem:subgraph-isomorphism} as starting point.

\begin{lemma}
    Assuming the ETH, there is no integer $c \geq 2$ such that $c$-RPKE-$3$ is solvable in $2^{o(n \log n)}$ time, even when the input contains only two agents. 
    The same lower bound applies to RPKE-$3$ with two agents.
\end{lemma}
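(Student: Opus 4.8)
The plan is to reduce from \textsc{Subgraph Isomorphism} and invoke \autoref{lem:subgraph-isomorphism}. Since that lower bound is phrased in terms of $n_G := |V(G)|$ alone, the reduction must output an instance whose number of vertices $n$ is \emph{linear} in $n_G$; a blow-up to $n = \Theta(n_G^2)$ would only give a non-contradictory $2^{o(n_G^2\log n_G)}$ bound, and a separate gadget per edge of $H$ already costs $\Theta(n_G^2)$ vertices. The way around this is to encode both $H$ and $G$ entirely into the \emph{cycle structure} of the constructed graph (which may have polynomially many short cycles) rather than into its vertex set, and to make one rejection pattern detect any violated edge. First I would reduce to the case $|V(H)| = |V(G)| =: n_0$ by padding $H$ with isolated vertices (if $|V(H)| > |V(G)|$ the instance is a trivial \textsc{no}); a subgraph isomorphism is then a bijection $\varphi\colon V(H)\to V(G)$ with $\{u,u'\}\in E(H)\Rightarrow\{\varphi(u),\varphi(u')\}\in E(G)$.

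I would then build a two-agent graph, with agents called blue and red. The blue agent owns a vertex $a_u$ for every $u\in V(H)$; the red agent owns a vertex $c_w$ for every $w\in V(G)$, a vertex $c'_u$ for every $u\in V(H)$, and one further vertex $q$; so $n = 3n_0+1$, and I set $k := n-1$. The cycles of length at most $3$ are, and only are: a \emph{selection} $3$-cycle $\langle a_u, c_w, c'_u\rangle$ for every $u\in V(H)$ and every $w\in V(G)$; a red-internal $3$-cycle $\langle c'_u, c'_{u'}, q\rangle$ for every $\{u,u'\}\in E(H)$; and a red-internal $2$-cycle $\langle c_w, c_{w'}\rangle$ for every non-adjacent pair $w\neq w'$ of $G$. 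Edge orientations are chosen so that no other cycle of length $\le 3$ arises (except possibly harmless triangles among the $c_w$-vertices). This graph has $n = \Theta(n_G)$ vertices, two agents, and is computable in polynomial time.

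The correctness argument has three parts. First I would show that every packing covering at least $k = n-1$ vertices equals $\X_\varphi := \{\langle a_u, c_{\varphi(u)}, c'_u\rangle : u\in V(H)\}$ for some bijection $\varphi$: since each $a_u$ lies only in selection cycles, an uncovered $a_{u^\ast}$ forces $c'_{u^\ast}$ to be covered by a red $q$-cycle, which uncovers a second $a$-vertex, contradicting that at most one vertex is uncovered; hence all $a_u$ are covered by distinct selection cycles, disjointness of the $c_w$ makes these encode a bijection, and $q$ becomes the unique uncovered vertex. Each $\X_\varphi$ covers all red vertices except $q$. Second, if $\varphi$ is a subgraph isomorphism then $\X_\varphi$ is $\infty$-rejection-proof: blue covers all of its own vertices and never rejects, while for red to strictly improve it must cover $q$, which forces some $\langle c'_a, c'_b, q\rangle$ with $\{a,b\}\in E(H)$ into the alternative packing after rejecting the selection cycles of $a$ and $b$; a short counting argument (the other $n_0-2$ of the $c'$-vertices must be recovered by the $n_0-2$ surviving selection cycles, which then cover every $c_w$ but $c_{\varphi(a)}$ and $c_{\varphi(b)}$, so one red $2$-cycle must cover those two, forcing $\{\varphi(a),\varphi(b)\}\notin E(G)$) shows this requires a violated edge, which does not exist; the same parity count ($2s+3r=2$) excludes the accidental $c_w$-triangles. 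Third, if $\varphi$ is not a subgraph isomorphism, pick $\{u,u'\}\in E(H)$ with $\{\varphi(u),\varphi(u')\}\notin E(G)$; the red agent $2$-rejects $\X_\varphi$ by discarding the selection cycles of $u$ and $u'$ and inserting $\langle c'_u, c'_{u'}, q\rangle$ and $\langle c_{\varphi(u)}, c_{\varphi(u')}\rangle$, obtaining a disjoint packing that covers exactly one more red vertex. Because $|\Xrej| = 2$, this is a $c$-rejection for every $c\ge 2$ and a rejection for RPKE-$3$; combined with the first two parts, the constructed instance is a \textsc{yes}-instance of $c$-RPKE-$3$ (resp.\ RPKE-$3$) if and only if $H$ is isomorphic to a subgraph of $G$. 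A $2^{o(n\log n)}$ algorithm for $c$-RPKE-$3$ with $c\ge 2$ (or for RPKE-$3$) would then decide \textsc{Subgraph Isomorphism} in $2^{o(n_G\log n_G)}$ time, contradicting \autoref{lem:subgraph-isomorphism} under the ETH.

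The step I expect to be the main obstacle is precisely making all edge constraints checkable under the linear vertex budget. The edge set of $H$ and the non-edge set of $G$ must be carried by short cycles that share the $\Theta(n_0)$ vertices $\{c'_u\}\cup\{c_w\}\cup\{q\}$, and the rejection-detection has to be \emph{two-dimensional} — an $H$-edge together with a $G$-non-edge — even though a single replacement cycle has only three slots. Splitting this detection across a $3$-cycle on the $c'$-side (carrying the $H$-edge) and a $2$-cycle on the $c$-side (carrying the $G$-non-edge), linked through the single shared vertex $q$, is what makes both directions go through; and the counting argument that rules out \emph{all} rejections in the valid case — not merely those of bounded size — is what allows one construction to serve RPKE-$3$ and every $c\ge 2$ at once, and incidentally explains why the hard instances have $k = n-1$.
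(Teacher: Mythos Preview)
Your approach is essentially the paper's: a linear-size reduction from \textsc{Subgraph Isomorphism} to a two-agent instance on $3n_G+1$ vertices with $k=n-1$, where selection cycles encode a bijection, two families of internal cycles encode $E(H)$ and $\overline{E(G)}$ respectively, and a single $2$-rejection detects any violated edge; you simplify by padding $H$ with isolated vertices rather than using the paper's filler $F$-cycles, and you swap which side carries the $2$-cycles versus the $3$-cycles through the extra vertex (the paper encodes $E(H)$ by $2$-cycles on the $H$-side and $\overline{E(G)}$ by $3$-cycles through $z$ on the $G$-side). One small inaccuracy to fix: your claim that orientations can avoid all byproduct short cycles except $c_w$-triangles fails for non-bipartite $H$, since realizing a directed $3$-cycle on $\{c'_u,c'_{u'},q\}$ for every $H$-edge without creating any $2$-cycle $\langle c'_u,q\rangle$ would amount to a proper $2$-coloring of $H$; such byproduct $2$-cycles will therefore appear (just as the paper's R-cycles $\langle x_3,z\rangle$ do), but they are harmless, as if red's alternative covers $q$ via some $\langle c'_a,q\rangle$ your counting argument gives $2s+3r=1$ with no nonnegative solution.
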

\begin{proof}
    Let $c \geq 2$ be arbitrary. We prove the statement by a reduction from \textsc{Subgraph Isomorphism}. The resulting instance can serve as an input to both RPKE-$3$ and $c$-RPKE-$3$. First, we give the reduction after which we prove it is correct for both problems. A visual example of the reduction can be seen in \autoref{fig:eth-reduction}.

    \begin{figure}[t]
    \centering
        \begin{minipage}[t]{.561\linewidth}
                \centering
                \includegraphics[width=.772\linewidth]{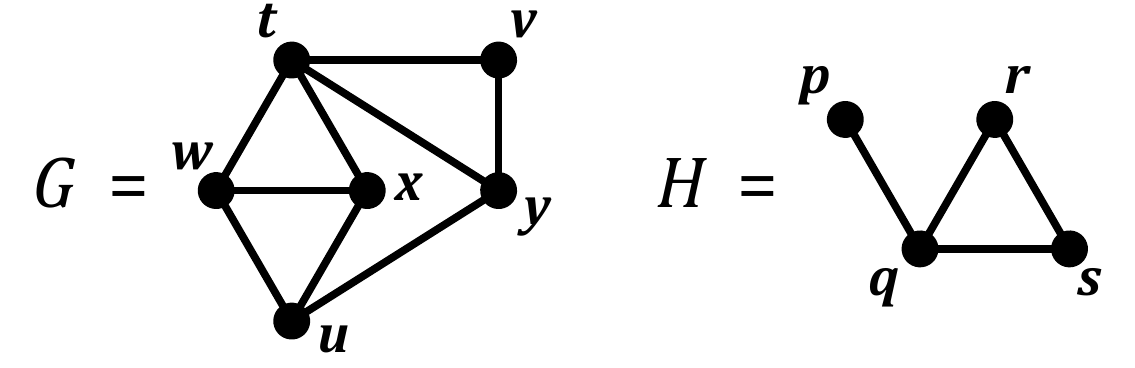}
                \subcaption{An instance $(G,H)$ of \textsc{Subgraph Isomorphism} as example input for the reduction.}
                \label{fig:eth-reduction-input}
        \end{minipage}
        \hfill
        \begin{minipage}[t]{.389\linewidth}
                \centering
                \includegraphics[width=.774\linewidth]{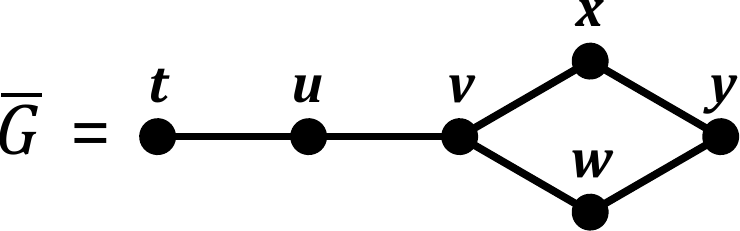}
                \subcaption{The complement of $G$, which is implicitly used to construct $G'$.}
                \label{fig:eth-reduction-complement}
        \end{minipage}
        \\[2\baselineskip]
        \begin{minipage}[t]{.534\linewidth}
                \includegraphics[width=\linewidth]{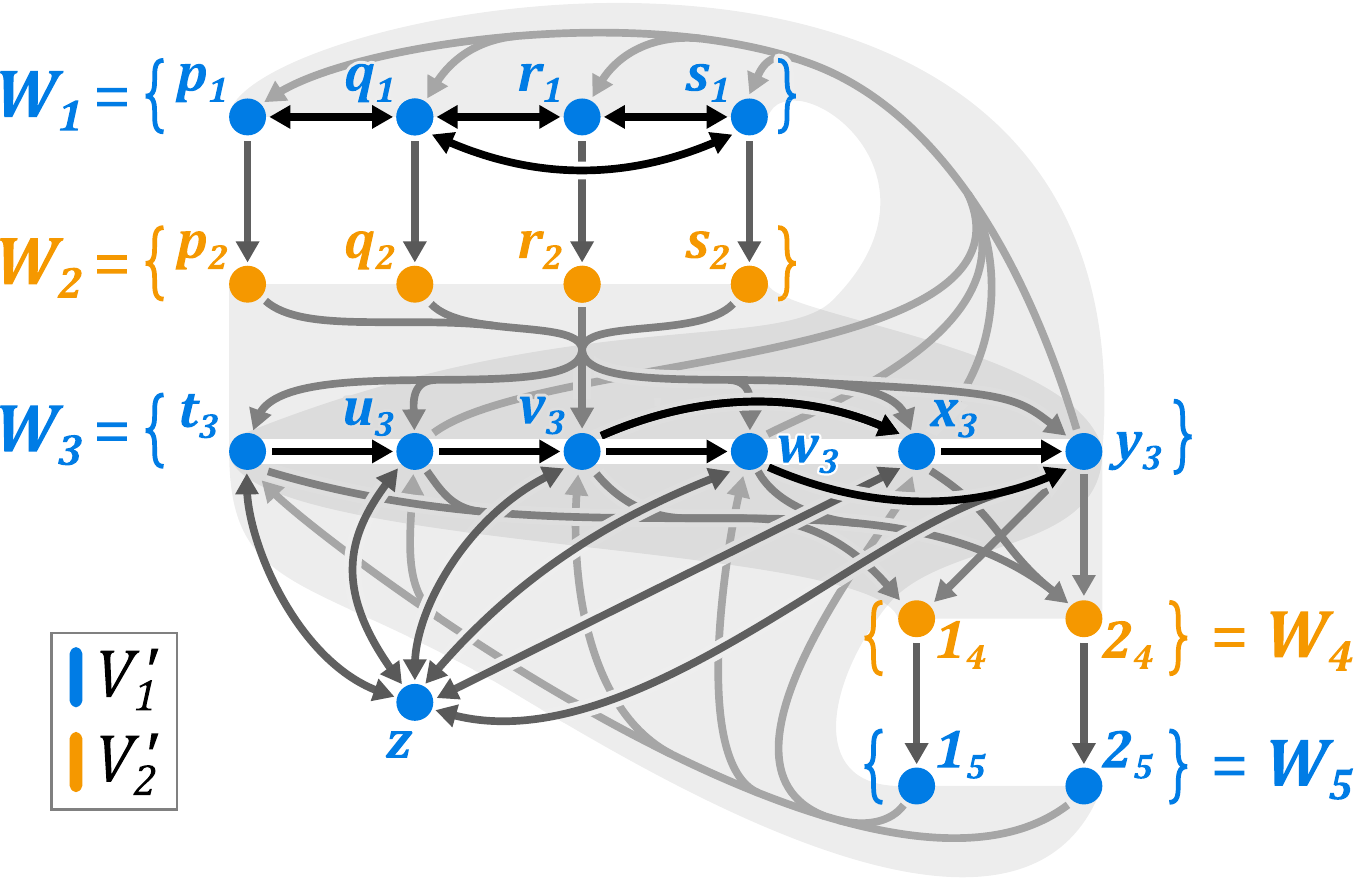}
                \subcaption{Resulting graph $G'$ obtained by running the reduction on the example input shown in \autoref{fig:eth-reduction-input}.}
                \label{fig:eth-reduction-output}
        \end{minipage}
        \hfill
        \begin{minipage}[t]{.407\linewidth}
            \centering
                \includegraphics[width=\linewidth]{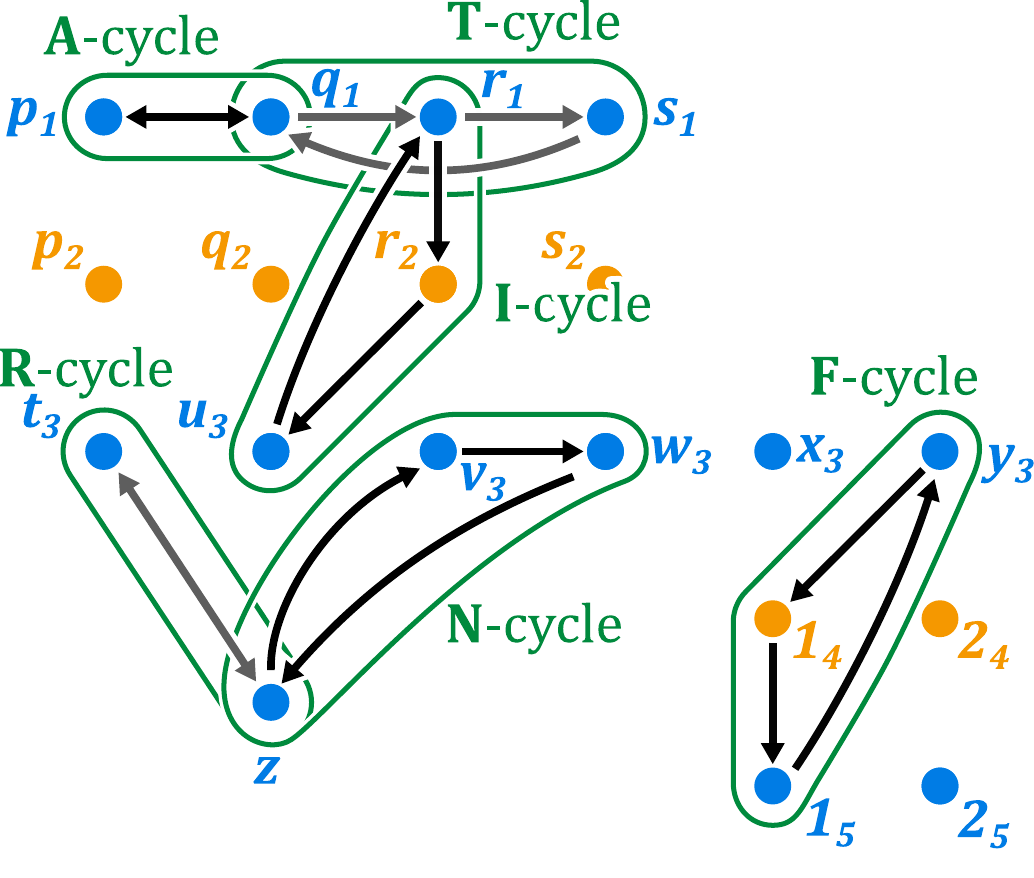}
                \subcaption{A simplified representation of $G'$ indicating each of the different classes of cycles.}
                \label{fig:eth-reduction-cycles}
        \end{minipage}
        \caption{The reduction applied to an example input. For the sake of visualization, not all edges in the graph $G'$ are shown. Four (partially overlapping) areas shaded in gray indicate that all vertices of $W_2$ and $W_5$ each have outgoing edges to all vertices in $W_3$ and that all vertices of $W_3$ have outgoing edges to all vertices of $W_1$ and $W_4$. Vertex colors are used to indicate the partition of the vertex set over the two agents with blue vertices belonging to $V_1'$ and orange vertices to $V_2'$.}
        \label{fig:eth-reduction}
    \end{figure}

    Let~$(G,H)$ be an input for the \textsc{Subgraph Isomorphism} problem consisting of two graphs and let~$n_G := |V(G)|$ and~$n_H := |V(H)|$. We assume that~$n_H \leq n_G$, as~$(G, H)$ would trivially be a NO-instance otherwise. Now, we give our reduction by showing how to modify the input~$(G,H)$ to an input $(G', V_1', V_2', k)$ with two agents for RPKE-$3$ and $c$-RPKE-$3$.

    We start by defining the vertex set of $G'$. To define the complete vertex set, it is useful to think of it as six separate parts: five sets $W_1, W_2, W_3, W_4$, and $W_5$, and a separate vertex~$z$. 
    First, for every vertex $u \in V(H)$, we add a vertex $u_1$ to $W_1$ and a vertex $u_2$ to $W_2$. 
    Next, for every vertex $x \in V(G)$, we add a vertex $x_3$ to $W_3$. Then, for every integer $i \in [n_G - n_H]$ we add a vertex $i_4$ to $W_4$ and a vertex $i_5$ to $W_5$. Finally, we add the additional vertex $z$ to the graph. The vertex set is partitioned over the two agents by defining $V_1' := W_1 \cup W_3 \cup W_5 \cup \{z\}$ and $V_2':= W_2 \cup W_4$. Remark that $|W_1 \cup  W_4| = |W_2 \cup W_5| = |W_3| = n_G$, so that $|V(G')| = 3n_G + 1$.

    Next, we define the edge set of $G'$. Although $G$ and $H$ are undirected graphs, recall that~$G'$, being part of a kidney exchange instance, is a directed graph. 
    \begin{enumerate}
        \item \label{itm:A-edge} For every $\{u,v\} \in E(H)$, we add $(u_1, v_1)$ and $(v_1, u_1)$ to $E(G')$.
        \item \label{itm:I-edge-straight} For every $u \in V(H)$, we add $(u_1, u_2)$ to $E(G')$.
        \item \label{itm:I-edges-other} For every $u \in V(H)$ and every $x \in V(G)$, we add $(u_2, x_3)$ and $(x_3, u_1)$ to $E(G')$.
        \item \label{itm:F-edge-straight} For every $i \in [n_G - n_H]$, we add $(i_4, i_5)$ to $E(G')$.
        \item \label{itm:F-edges-other} For every $x \in V(G)$ and every $i \in [n_G - n_H]$, we add $(x_3, i_4)$ and $(i_5, x_3)$ to $E(G')$.
        \item \label{itm:R-edges} For every $x \in V(G)$, we add $(x_3,z)$ and $(z, x_3)$ to $E(G')$.
        \item \label{itm:N-edge} We fix an arbitrary ordering $\prec$ on $V(G)$. Then, for every $x,y \in V(G)$ such that $\{x,y\} \notin E(G)$ and $x \prec y$, we add $(x_3, y_3)$ to $E(G')$.
    \end{enumerate}
    Finally, we define $k = 3 \cdot n_G$. Note that this is one less than the total number of vertices in~$G'$, denoted as $n := |V(G')| = 3 n_G + 1$. This concludes the description of the reduction. 
    
    We distinguish six different classes of $3$-cycles that live in $G'$. Below, we describe for each of these classes what their intuitive role is in the reduction and we indicate each class of cycles with a letter to refer to them more easily later in the proof. See also \autoref{fig:eth-reduction-cycles}.
    \begin{itemize}
        \item For every edge $\{u,v\} \in E(H)$, the graph $G'$ contains the cycle $\langle u_1, v_1 \rangle$. We call such cycles A-cycles for encoding the \textbf{a}djacencies in $H$.
        \item For every triangle in $H$ on vertices $u$, $v$, and $w$, the graph $G'$ contains both orientations of the directed cycle on vertices $u_1$, $v_1$ and $w_1$. We call such cycles T-cycles due to their correspondence to \textbf{t}riangles in the graph $H$. These cycles originate merely as a byproduct of constructing the A-cycles and serve no explicit purpose in our reduction.
        \item For every $u \in V(H)$ and every $x \in V(G)$, the graph $G'$ contains the cycle $\langle u_1, u_2, x_3 \rangle$. We call such cycles I-cycles, as they are used to encode subgraph \textbf{i}somorphisms from $H$ to $G$: including a cycle $\langle u_1, u_2, x_3 \rangle$ will correspond to mapping $u \in V(H)$ to $x \in V(G)$ in a subgraph isomorphism. See also \autoref{fig:eth-reduction-isomorphisms} (left image) for an example.
        \item For every $x \in V(G)$ and every $i \in [n_G - n_H]$, the graph $G'$ contains the cycle $\langle x_3, i_4, i_5 \rangle$. We call such cycles F-cycles as they are used in the construction of optimal solutions as \textbf{f}iller cycles to cover vertices in $W_3$ that are not already covered by I-cycles.
        \item For every $x,y \in V(G)$ with $\{x,y\} \notin E(G)$, the graph $G'$ contains the cycle $\langle x_3, y_3, z \rangle$ (or $\langle y_3, x_3, z \rangle$ depending on whether $x \prec y$). We call such cycles N-cycles for encoding the \textbf{n}on-adjacencies in $G$.
        \item Finally, for every $x \in V(G)$, the graph $G'$ contains the cycle $\langle x_3, z \rangle$. Much like T-cycles, these cycles serve no explicit purpose in our reduction and they originate merely as byproduct of constructing the N-cycles. We call them R-cycles, since optimal rejection-proof solutions typically do not include any of them, making these cycles mostly \textbf{r}edundant.
    \end{itemize}

    \begin{figure}[t]
    \centering
        \begin{minipage}[t]{.475\linewidth}
                \centering
                \includegraphics[width=.779\linewidth]{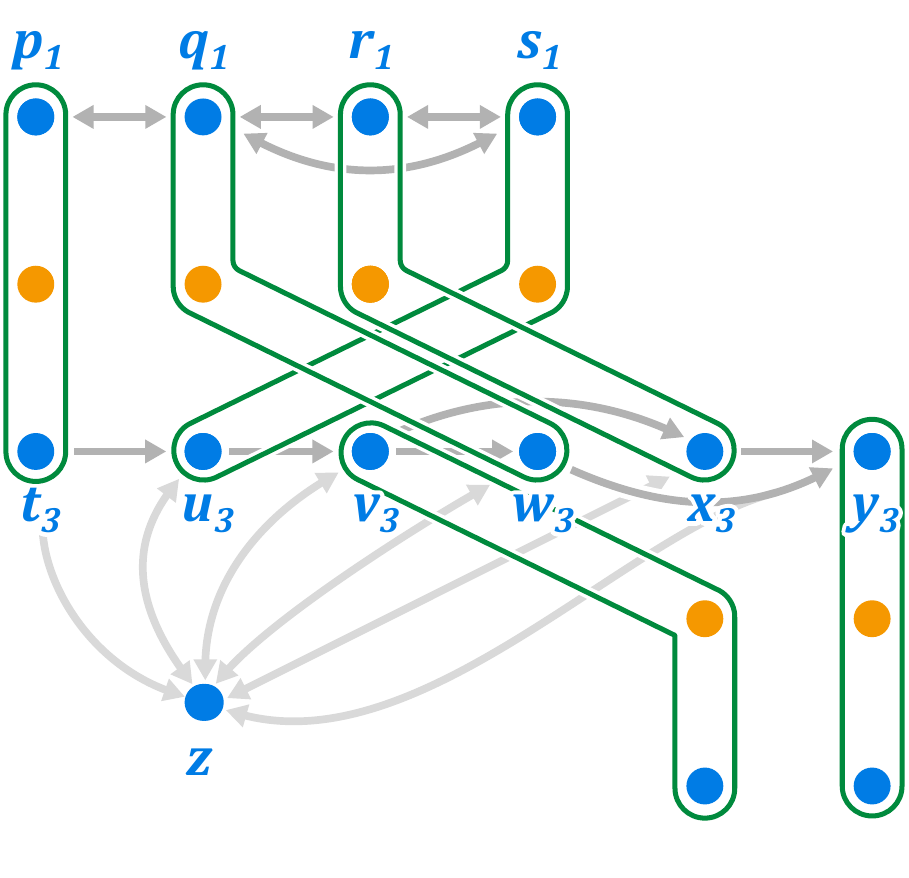}
        \end{minipage}
        \hfill
        \begin{minipage}[t]{.475\linewidth}
                \centering
                \includegraphics[width=.779\linewidth]{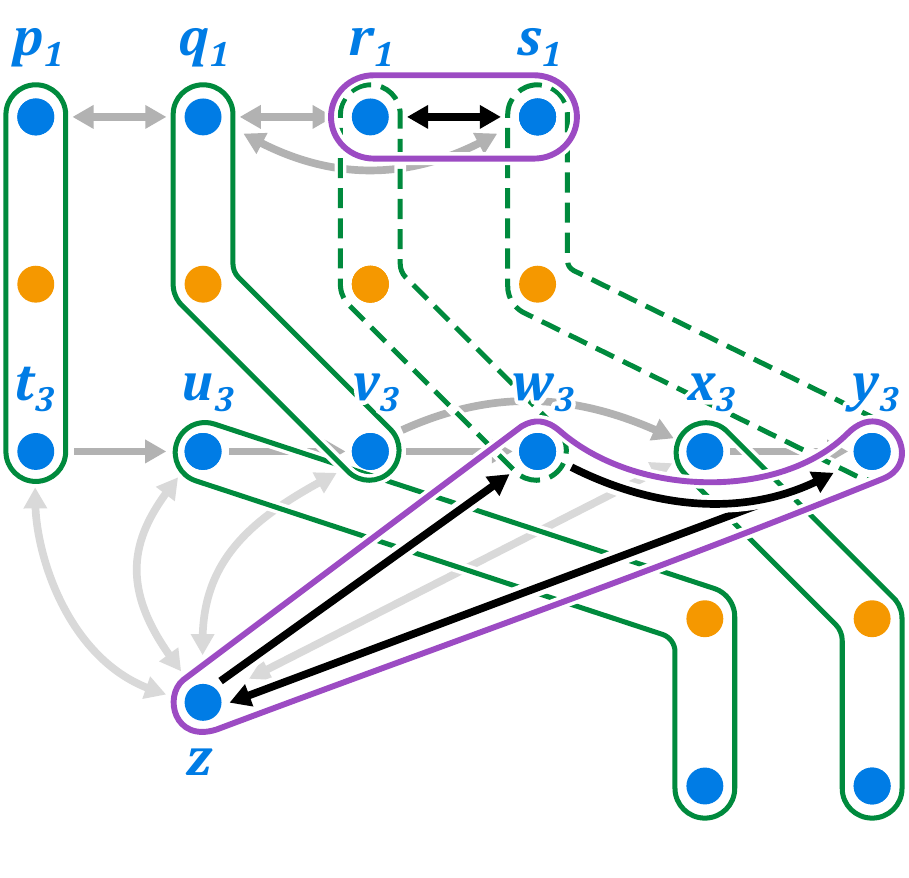}
        \end{minipage}
        \caption{Both figures depict the same graph $G'$ as in \autoref{fig:eth-reduction}, each with a different packing of cycles shown in green. For readability, only some edges are shown. The cycle packing on the left encodes the subgraph isomorphism from $H$ to $G$ that maps $p$ to $t$, $q$ to $w$, $r$ to $x$, and $s$ to $u$. 
        The green packing on the right encodes a function that maps $r$ to $w$ and $s$ to $y$. This is not a subgraph isomorphism because $\{r, s\} \in E(H)$ and $\{w , y\} \notin E(G)$. Hence, agent $1$ can reject the green packing by proposing an alternative one in which the dashed cycles are replaced by the purple cycles.}
        \label{fig:eth-reduction-isomorphisms}
    \end{figure}

    Before proving the correctness of our reduction, observe that $G'$ does not contain any $3$-cycles other than those contained in the six classes described above. In particular, note that $G'[W_3]$ is a DAG as the obvious extension of $\prec$ to $W_3$ forms a topological ordering. Finally, cycles whose length is greater than $3$ will not be relevant in our proof, since RPKE-$3$ and $c$-RPKE-$3$ only deal with $3$-cycles by definition.

    Most of the remainder of this proof is used to show the correctness of the reduction. We do this by proving that $H$ is isomorphic to a subgraph of $G$ if and only if $G'$ has a ($c$-)rejection-proof packing of $3$-cycles that covers at least $k$ vertices. Below, we prove the two directions of this bi-implication separately.

    \begin{claim} \label{clm:complete}
        If $G'$ contains a $2$-rejection-proof packing of $3$-cycles that covers at least $k = 3 \cdot n_G$ vertices (e.g. because this packing is even (completely) rejection-proof), then $H$ is isomorphic to a subgraph of $G$.
    \end{claim}
    \begin{claimproof}
        Let $\C$ be a $2$-rejection-proof packing of $3$-cycles in $G'$ that covers at least $k = 3 \cdot n_G$ vertices. Before showing how to construct from $\C$ a subgraph isomorphism from $H$ to $G$, we derive some structural properties of $\C$. We start by observing that at most one vertex from~$G'$ is not covered by~$\C$. We use this property to show that $\C$ must contain $n_H$ I-cycles. 
        
        First, we argue that the number of I-cycles in $\C$ must be more than $n_H - 2$. Since vertices from $W_2$ only appear in I-cycles and each I-cycle contains exactly one of these vertices, at least two vertices from $W_2$ would be uncovered by $\C$ otherwise. This would contradict the assumption that $\C$ leaves at most one vertex uncovered.

        Next, we argue that the number of I-cycles in $\C$ cannot be exactly $n_H - 1$. In that case exactly one vertex $u_1$ from $W_1$ and exactly one vertex $v_2$ from $W_2$ would not be covered by an I-cycle in $\C$. If $v_2$ is not covered by an I-cycle in $\C$, this means that it is not covered at all by $\C$. Furthermore, $u_1$ only appears in I-cycles, A-cycles, and T-cycles. By assumption, it is not covered by an I-cycle in $\C$. Moreover, it cannot be covered by an A-cycle or T-cycle in $\C$ either: such cycles contain at least two vertices from $W_1$, but $u_1$ is the only vertex in $W_1$ that is not covered by an I-cycle in $\C$. As such, $\C$ does not contain A-cycles or T-cycles since these would not be vertex-disjoint from all I-cycles in $\C$. This means that neither $v_2$ nor $u_1$ is covered by $\C$, contradicting that $\C$ leaves at most one vertex uncovered.

        By arguing that the number of I-cycles in $\C$ must be strictly more than $n_H - 2$ but not exactly $n_H - 1$, we have shown that the number of I-cycles in $\C$ is exactly $n_H$. This means that all vertices from $W_1$ and $W_2$ are covered by I-cycles in $\C$.

        Next, we see that $\C$ must include $n_G - n_H$ F-cycles. If it were to contain less F-cycles, at least one vertex from $W_4$ and $W_5$ each would not be covered by F-cycles. Since vertices from these two sets appear only in F-cycles, this would mean that they are not covered by $\C$ at all. In turn, this would contradict the observation that leaves $\C$ at most one vertex uncovered.

        By including $n_H$ I-cycles and $(n_G-n_H)$ F-cycles, all vertices of $G'$ except $z$ are covered. Since every cycle contains a vertex different from $z$, the packing $\C$ does not contain any other cycles. Knowing the structure of $\C$ allows us to construct a subgraph isomorphism $\varphi: V(H) \rightarrow V(G)$.

        Let $u \in V(H)$ be arbitrary, and let $C_u \in \C$ be the unique I-cycle from $\C$ that contains $u_1$. Since every vertex from $W_1$ is contained in exactly one I-cycle in $\C$, this is well-defined. Let $x_3 \in W_3$ be the unique vertex from $W_3$ in $C_u$. Then, we define $\varphi(u) = x$. To complete the construction of $\varphi$, we do the same for every $u \in V(H)$.

        This construction yields a function~$\varphi$ that maps every vertex in~$H$ to a unique vertex in~$G$. To show that $\varphi$ is in particular a subgraph isomorphism, it remains to show that for every $\{u,v\} \in E(H)$ it holds that $\{\varphi(u), \varphi(v)\} \in E(G)$. Suppose for contradiction that there is some $\{u,v\} \in E(H)$ for which this is not the case, i.e.~$\{\varphi(u), \varphi(v)\} \notin E(G)$. We show that agent $1$ would then reject $\C$. The right image in \autoref{fig:eth-reduction-isomorphisms} shows an example of this situation.

        Let $x = \varphi(u)$ and $y = \varphi(v)$. Assume w.l.o.g. that $x \prec y$ in the arbitrary ordering fixed onto $V(G)$ in step \ref{itm:N-edge} of the construction of $E(G')$. Then the following cycles exist:
        \begin{itemize}
            \item Because $\{u,v\} \in E(H)$, the A-cycle $\langle u_1, v_1 \rangle$ is added to~$G'$ in construction step \ref{itm:A-edge}.
            \item Because $\{x,y\} \notin E(G)$, the N-cycle $\langle x_3, y_3, z \rangle$ is added to~$G'$ in steps~\ref{itm:R-edges} and~\ref{itm:N-edge}.
        \end{itemize}
        Now, agent $1$ can reject the solution by proposing an alternative packing to $\C$ in which $\langle u_1, u_2, x_3 \rangle$ and $\langle v_1, v_2, y_3 \rangle$ are removed and the cycles $\langle u_1, v_1 \rangle$ and $\langle x_3, y_3, z \rangle$ are added. 
        
        This modification to $\C$ yields a packing of cycles that are still pairwise disjoint and which cover all vertices of $V_1'$. Moreover, the two cycles added by agent $1$ are indeed $1$-internal cycles in the graph, which means that agent $1$ will indeed reject $\C$ whenever there is a $\{u,v\} \in E(H)$ such that $\{\varphi(u), \varphi(v)\} \notin E(G)$. This contradicts the assumption that $\C$ is 2-rejection-proof, so we conclude that $\varphi$ is a valid subgraph isomorphism from $H$ to $G$.
    \end{claimproof}
    \begin{restatable}{claim}{ReductionSound} \label{clm:reduction-sound}
        ($\bigstar$) If $H$ is isomorphic to a subgraph of $G$, then $G'$ contains a rejection-proof packing of $3$-cycles (which is thus also $c$-rejection-proof) that covers $k = 3 \cdot n_G$ vertices.
    \end{restatable}
    \begin{claimproof}[Proof sketch]
        Let $\varphi:V(H) \rightarrow V(G)$ be a subgraph isomorphism from $H$ to $G$. To encode this isomorphism as a packing of $3$-cycles, we include the I-cycle $\langle u_1, u_2, x_3 \rangle$ for every $u \in V(H)$ and $x \in V(G)$ such that $x = \varphi(u)$. We extend this to a packing $\C$ that covers~$k$ vertices by adding arbitrary but disjoint F-cycles that cover the $n_G - n_H$ vertices from $W_3$ that are not contained in any of the added I-cycles. It remains to show that this packing is rejection-proof.
        
        As $\C$ leaves only the vertex $z$ uncovered, which belongs to $V_1'$, only agent $1$ can reject and must do so by constructing an alternative packing that covers all vertices of $V_1'$. In particular, they must propose to add some internal cycle $C_z$ that includes $z$. By construction of $G'$, this cycle is either an R-cycle or an N-cycle and we can show with a case distinction that neither of those two options extends to a valid rejection for agent $1$.
    \end{claimproof}
    
    Finally, we argue that Claims~\ref{clm:complete} and~\ref{clm:reduction-sound} combine to prove the original lemma statement. Together, they prove the correctness of the reduction from \textsc{Subgraph Isomorphism} to both RPKE-$3$ and $c$-RPKE-$3$. Moreover, the reduction can be executed in polynomial time. Hence, if an algorithm were to exist that solves either RPKE-$3$ or $c$-RPKE-$3$ in $2^{o(n \log n)}$ time, this algorithm could be used to determine in $2^{o(n \log n)} = 2^{o(n_G \log n_G)}$ time whether $H$ is isomorphic to a subgraph of $G$. By \autoref{lem:subgraph-isomorphism}, this is not possible assuming the ETH.
\end{proof}

For completeness, we remark without proof that the construction above can be modified to prove that for any $d \geq 3$ and~$c\geq 3$, an algorithm that solves $c$-RPKE-$d$ in time~$2^{o(n \log n)}$ would violate the ETH.

\section{A single-exponential algorithm for 1-rejections}
\label{sec:1-rejection-algo}
In the previous section, we have shown that under the ETH it is impossible to solve RPKE-$d$ (or its restricted case $2$-RPKE-$d$) in time~$2^{o(n \log n)}$. This raises the question whether a single-exponential running time is possible for $1$-RPKE-$d$. We answer this question positively by providing a $3^n\cdot(n + m)^{\Oh(1)}$ algorithm for the generalization $1$-RP-$d$-SP.

\begin{restatable}{theorem}{thmOneRejection}\label{thm:1-rejection-algo}
    $(\bigstar)$ For each fixed~$d$, the $1$-RP-$d$-SP problem can be solved in $3^n\cdot(n + m)^{\Oh(1)}$ time.
\end{restatable}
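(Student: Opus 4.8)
The plan is to design a dynamic-programming algorithm over subsets of the universe $U$, using a triple-valued assignment (hence the $3^n$ factor) that, for each element, records whether it is ``inside the current packing'', ``reserved for a potential rejection'', or ``free''. First I would observe that, because $c=1$, an agent rejects a candidate solution $\X$ precisely by removing a single set $X \in \X$ and replacing it with a collection $\Xint$ of $i$-internal sets that together cover strictly more of $U_i$ than $X$ did; crucially the sets in $\Xint$ must be disjoint from $\X \setminus \{X\}$. So to verify $1$-rejection-proofness of a fixed candidate solution $\X$ we only need, for every agent $i$ and every set $X \in \X$, to check whether the $i$-internal sets living entirely inside $(U_i \setminus \bigcup(\X \setminus \{X\}))$ can be packed disjointly to cover more than $|X \cap U_i|$ elements of $U_i$; by \autoref{lm:exact set cover} (or a weighted variant of it, maximizing coverage by a packing) this check is a single-exponential computation in the number of \emph{free} elements available to agent $i$.

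The key idea for getting a $3^n$ bound rather than $2^n \cdot 2^n$ is to interleave the construction of $\X$ with the certification that no rejection exists. Concretely, I would iterate over all candidate solutions $\X$ by a standard subset-DP: maintain the set $A = \bigcup \X$ of covered elements, and for the DP to have a manageable state I restrict attention to maximal-by-coverage packings (it suffices to find \emph{some} rejection-proof packing covering $\geq k$ elements, and extending a packing only helps coverage while making rejections harder, by the spirit of \autoref{lm: removing not used sets preserves solution}). For each such $\X$, partition $U$ into $A$ (covered), $R$ (the ``rejection workspace'', i.e.\ elements outside $A$ that a rejecting agent could grab), and the remainder; summing over all ways to choose $A$ and then, inside each agent, over all ways to choose the workspace among the $|U_i \setminus A|$ free elements, gives $\sum \binom{n}{|A|} 2^{\,n - |A|} = 3^n$ by the binomial theorem. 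For a fixed $A$ and fixed workspace, checking whether \emph{some} $X\in\X$ admits an improving internal replacement inside the workspace reduces to a maximum-coverage-by-disjoint-packing subproblem that is solved in $2^{|\text{workspace}|}\cdot(n+m)^{\Oh(1)}$ time by (a weighted extension of) \autoref{lm:exact set cover}; summed over all states this stays within $3^n\cdot(n+m)^{\Oh(1)}$ overall. I would present the algorithm as: enumerate $A$ (an inclusion-maximal disjoint-set cover target), enumerate per-agent free-element subsets $F_i$, run the packing subroutine, and accept iff $|A| \geq k$ and no agent has an improving single-set replacement using only elements of $F_i$.

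The main obstacle I anticipate is bookkeeping the dependency between the chosen set $X$ to be removed and the elements freed by its removal: when agent $i$ rejects by dropping $X\in\X$, the elements of $X$ (not just the globally uncovered ones) become available for the internal replacement sets, so the ``workspace'' available to agent $i$ is $(U_i\setminus A)\cup(X\cap U_i)$ rather than just $U_i\setminus A$, and this varies with $X$. To handle this cleanly I would, for each agent $i$ and each $X\in\X$ with $X\cap U_i\neq\emptyset$, run the coverage subroutine on the universe $(U_i \setminus A) \cup (X \cap U_i)$ with the requirement that the chosen internal packing be disjoint from every \emph{other} set of $\X$; since $|X\cap U_i|\le d$ is a constant, this enlarges the workspace only by an additive constant and the running time bound is unaffected up to polynomial factors. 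A secondary technical point is to argue that restricting to inclusion-maximal packings $\X$ loses no solutions: if a non-maximal rejection-proof packing covering $\geq k$ elements exists, greedily adding disjoint sets keeps coverage $\geq k$ and, by \autoref{lm: removing not used sets preserves solution} applied in reverse together with \autoref{ob: idential collections rejection iff}, cannot introduce a new rejection that the smaller packing did not already admit, so maximality is without loss of generality. With these two points settled, the correctness follows from \autoref{def:set-rejection} specialized to $|\Xrej|\le 1$, and the running time from the $3^n$ state count times the single-exponential-in-workspace cost of the coverage subroutine, which together with $|F_i|\le n-|A|$ telescopes to $3^n\cdot(n+m)^{\Oh(1)}$.
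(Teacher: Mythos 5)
Your high-level plan --- enumerate the covered set $A$, use the exact-cover DP of \autoref{lm:exact set cover}, and get $3^n$ from the binomial theorem --- is essentially the paper's approach, and you correctly identify the right ``workspace'' $(U_i \setminus A) \cup (X \cap U_i) = U_i \setminus (A \setminus X)$ for a $1$-rejection that drops $X$. But there is a gap at the centre of the argument: you never say how the candidate packings $\X$ are searched. A subset DP over $A=\bigcup\X$ decides whether \emph{some} exact cover of $A$ exists; it does not ``iterate over all candidate solutions $\X$'', yet your rejection check is phrased per set $X\in\X$ of a specific packing. Two exact covers of the same $A$ consist of different sets, so deciding existence of a cover of $A$ and separately checking ``no agent has an improving single-set replacement'' (of which $X$, in which packing?) does not compose into a correct algorithm. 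The missing observation --- the actual crux of the paper's proof --- is that whether a set $X$ admits a $1$-rejection in a solution covering exactly $A$ depends \emph{only} on the pair $(A,X)$: the replacement must be a disjoint family of $i$-internal sets inside $U_i\setminus(A\setminus X)$ covering more than $|X\cap U_i|$ elements of $U_i$, independently of the identities of the other sets in the packing. This licenses the correct two-step algorithm: \emph{filter} $\mathcal{S}[A]$ down to the sets passing this test (and discard $A$ entirely if some internal set is disjoint from $A$), then run a single \textsc{Exact Set Cover} on $(A,\mathcal{S}')$; any exact cover by filtered sets is automatically $1$-rejection-proof, and any $1$-rejection-proof solution covering $A$ uses only filtered sets. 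Your final algorithm summary (``accept iff $|A|\ge k$ and no agent has an improving single-set replacement using only elements of $F_i$'') omits the cover-existence check altogether and still refers to an unspecified $X$.

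Two smaller points. The detour through inclusion-maximal packings and the enumeration of per-agent workspaces $F_i$ is unnecessary: the workspace for each pair $(i,X)$ is determined, not chosen, and the exponential factor in the $3^n$ bound should come from the exact-cover DP on the universe $A$ (the paper's count is $\sum_j\binom{n}{j}2^j$), not from enumerating subsets of $U\setminus A$. Moreover, once no internal set is disjoint from $A$, every set in a replacement family must intersect $X$, so the family has at most $d$ sets and the per-set filter runs in polynomial time $m^{\Oh(d)}$; no weighted exact-cover subroutine is needed.
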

\begin{proof}[Proof sketch]
    The main idea for solving an instance $I = (U_1,\dots,U_p,\mathcal{S},k)$ of $1$-RP-$d$-SP is as follows. Let~$U := \bigcup_{i \in [p]} U_i$ be the complete universe. The algorithm iterates over all subsets~$U' \subseteq U$ of size at least~$k$, each of which could be the set of elements covered by a solution to~$I$. It then tries to find a solution covering exactly~$U'$. The main insight now is the following: whether or not a set~$A \in \S[U']$ leads to a $1$-rejection by agent~$i$ in a solution that covers exactly~$U'$, is uniquely determined by the combination of~$U'$ and~$A$. It is independent of the identities of the other sets used in the solution. Agent~$i$ performs a $1$-rejection for~$A$ if and only if there is a collection of pairwise disjoint $i$-internal sets avoiding~$U' \setminus A$ that covers more elements from~$U_i$ than~$A$ does. For each choice of~$U'$ we can therefore compute a collection~$\mathcal{S}'$ of sets that are feasible to be used in a $1$-rejection-proof solution covering exactly~$U'$. Testing whether there is a $1$-rejection-proof solution covering exactly~$U'$ then turns into an instance of \textsc{Exact Set Cover} over a universe of size~$|U'| \leq |U|$, which can be solved efficiently via \autoref{lm:exact set cover}. The exact running time follows from some small optimizations and the binomial theorem.
\end{proof}

\section{Conclusion and discussion} \label{sec:conclusion}
In this paper we started the parameterized complexity analysis of rejection-proof kidney exchange, which was recently introduced by Blom et al.~\cite{Blom24} due to practical applications. Our FPT algorithm shows that the complexity of the problem is governed by the size of the solution, rather than the size of the instance. For the single-agent version of the kidney exchange problem, its parameterized complexity has also been considered with respect to structural parameterizations such as treewidth. We leave such investigations of rejection-proof kidney exchange to further work. Looking beyond the kidney exchange (i.e., $d$-cycle packing) problem studied in this paper, one could consider rejection-proof versions of other packing problems such as \textsc{Chordless Cycle Packing}~\cite{Marx20} and \textsc{Odd Cycle Packing}~\cite{KawarabayashiR10}. 


\bibliography{references}

\begin{thebibliography}{10}

\bibitem{Abraham07}
David~J. Abraham, Avrim Blum, and Tuomas Sandholm.
\newblock Clearing algorithms for barter exchange markets: enabling nationwide kidney exchanges.
\newblock In {\em Proceedings of the 8th ACM conference on Electronic commerce}, EC07. ACM, June 2007.
\newblock \href {https://doi.org/10.1145/1250910.1250954} {\path{doi:10.1145/1250910.1250954}}.

\bibitem{Agarwal19}
Nikhil Agarwal, Itai Ashlagi, Eduardo Azevedo, Clayton~R. Featherstone, and \"{O}mer Karaduman.
\newblock Market failure in kidney exchange.
\newblock {\em American Economic Review}, 109(11):4026–4070, November 2019.
\newblock \href {https://doi.org/10.1257/aer.20180771} {\path{doi:10.1257/aer.20180771}}.

\bibitem{Ashlagi15}
Itai Ashlagi, Felix Fischer, Ian~A. Kash, and Ariel~D. Procaccia.
\newblock Mix and match: A strategyproof mechanism for multi-hospital kidney exchange.
\newblock {\em Games and Economic Behavior}, 91:284–296, May 2015.
\newblock \href {https://doi.org/10.1016/j.geb.2013.05.008} {\path{doi:10.1016/j.geb.2013.05.008}}.

\bibitem{Biro20}
Peter Biro, Marton Gyetvai, Xenia Klimentova, Joao~Pedro Pedroso, William Pettersson, and Ana Viana.
\newblock Compensation scheme with shapley value for multi-country kidney exchange programmes.
\newblock In {\em ECMS 2020 Proceedings edited by Mike Steglich, Christian Mueller, Gaby Neumann, Mathias Walther}, page 129–136. ECMS, June 2020.
\newblock \href {https://doi.org/10.7148/2020-0129} {\path{doi:10.7148/2020-0129}}.

\bibitem{Biro21}
Péter Biró, Joris van~de Klundert, David Manlove, William Pettersson, Tommy Andersson, Lisa Burnapp, Pavel Chromy, Pablo Delgado, Piotr Dworczak, Bernadette Haase, Aline Hemke, Rachel Johnson, Xenia Klimentova, Dirk Kuypers, Alessandro Nanni~Costa, Bart Smeulders, Frits Spieksma, María~O. Valentín, and Ana Viana.
\newblock Modelling and optimisation in european kidney exchange programmes.
\newblock {\em European Journal of Operational Research}, 291(2):447–456, June 2021.
\newblock \href {https://doi.org/10.1016/j.ejor.2019.09.006} {\path{doi:10.1016/j.ejor.2019.09.006}}.

\bibitem{Blom24}
Danny Blom, Bart Smeulders, and Frits Spieksma.
\newblock Rejection-proof mechanisms for multi-agent kidney exchange.
\newblock {\em Games and Economic Behavior}, 143:25–50, January 2024.
\newblock \href {https://doi.org/10.1016/j.geb.2023.10.015} {\path{doi:10.1016/j.geb.2023.10.015}}.

\bibitem{Burra07}
Patrizia Burra and Manuela De~Bona.
\newblock Quality of life following organ transplantation.
\newblock {\em Transplant International}, 20(5):397–409, May 2007.
\newblock \href {https://doi.org/10.1111/j.1432-2277.2006.00440.x} {\path{doi:10.1111/j.1432-2277.2006.00440.x}}.

\bibitem{Carvalho23}
Margarida Carvalho and Andrea Lodi.
\newblock A theoretical and computational equilibria analysis of a multi-player kidney exchange program.
\newblock {\em European Journal of Operational Research}, 305(1):373–385, February 2023.
\newblock \href {https://doi.org/10.1016/j.ejor.2022.05.027} {\path{doi:10.1016/j.ejor.2022.05.027}}.

\bibitem{Carvalho16}
Margarida Carvalho, Andrea Lodi, João~Pedro Pedroso, and Ana Viana.
\newblock Nash equilibria in the two-player kidney exchange game.
\newblock {\em Mathematical Programming}, 161(1–2):389–417, April 2016.
\newblock \href {https://doi.org/10.1007/s10107-016-1013-7} {\path{doi:10.1007/s10107-016-1013-7}}.

\bibitem{CyganFGKMPS16}
Marek Cygan, Fedor~V. Fomin, Alexander Golovnev, Alexander~S. Kulikov, Ivan Mihajlin, Jakub Pachocki, and Arkadiusz Socala.
\newblock Tight bounds for graph homomorphism and subgraph isomorphism.
\newblock In {\em Proceedings of the Twenty-Seventh Annual {ACM-SIAM} Symposium on Discrete Algorithms, {SODA} 2016, Arlington, VA, USA, January 10-12, 2016}, pages 1643--1649. {SIAM}, 2016.
\newblock \href {https://doi.org/10.1137/1.9781611974331.CH112} {\path{doi:10.1137/1.9781611974331.CH112}}.

\bibitem{CyganFKLMPPS15}
Marek Cygan, Fedor~V. Fomin, Lukasz Kowalik, Daniel Lokshtanov, D{\'{a}}niel Marx, Marcin Pilipczuk, Michal Pilipczuk, and Saket Saurabh.
\newblock {\em Parameterized Algorithms}.
\newblock Springer, 2015.
\newblock \href {https://doi.org/10.1007/978-3-319-21275-3} {\path{doi:10.1007/978-3-319-21275-3}}.

\bibitem{Erdos60}
P.~Erd\"{o}s and R.~Rado.
\newblock Intersection theorems for systems of sets.
\newblock {\em Journal of the London Mathematical Society}, s1-35(1):85–90, January 1960.
\newblock \href {https://doi.org/10.1112/jlms/s1-35.1.85} {\path{doi:10.1112/jlms/s1-35.1.85}}.

\bibitem{FlumGrohe12}
Jörg Flum and Martin Grohe.
\newblock {\em Parameterized Complexity Theory}.
\newblock Springer Berlin, Heidelberg, 2006.
\newblock \href {https://doi.org/10.1007/3-540-29953-X} {\path{doi:10.1007/3-540-29953-X}}.

\bibitem{FoucaudGK0IST25}
Florent Foucaud, Esther Galby, Liana Khazaliya, Shaohua Li, Fionn~Mc Inerney, Roohani Sharma, and Prafullkumar Tale.
\newblock Metric dimension and geodetic set parameterized by vertex cover.
\newblock In {\em 42nd International Symposium on Theoretical Aspects of Computer Science, {STACS} 2025, March 4-7, 2025, Jena, Germany}, volume 327 of {\em LIPIcs}, pages 33:1--33:20. Schloss Dagstuhl - Leibniz-Zentrum f{\"{u}}r Informatik, 2025.
\newblock \href {https://doi.org/10.4230/LIPICS.STACS.2025.33} {\path{doi:10.4230/LIPICS.STACS.2025.33}}.

\bibitem{Hajaj15}
Chen Hajaj, John Dickerson, Avinatan Hassidim, Tuomas Sandholm, and David Sarne.
\newblock Strategy-proof and efficient kidney exchange using a credit mechanism.
\newblock {\em Proceedings of the AAAI Conference on Artificial Intelligence}, 29(1), February 2015.
\newblock \href {https://doi.org/10.1609/aaai.v29i1.9322} {\path{doi:10.1609/aaai.v29i1.9322}}.

\bibitem{Hebert-JohnsonL24}
{\'{U}}rsula H{\'{e}}bert{-}Johnson, Daniel Lokshtanov, Chinmay Sonar, and Vaishali Surianarayanan.
\newblock Parameterized complexity of kidney exchange revisited.
\newblock In {\em Proceedings of the Thirty-Third International Joint Conference on Artificial Intelligence, {IJCAI} 2024, Jeju, South Korea, August 3-9, 2024}, pages 76--84. ijcai.org, 2024.
\newblock \href {https://doi.org/10.24963/ijcai.2024/9} {\path{doi:10.24963/ijcai.2024/9}}.

\bibitem{ImpagliazzoP01}
Russell Impagliazzo and Ramamohan Paturi.
\newblock On the complexity of k-sat.
\newblock {\em Journal of Computer and System Sciences}, 62(2):367--375, 2001.
\newblock \href {https://doi.org/10.1006/JCSS.2000.1727} {\path{doi:10.1006/JCSS.2000.1727}}.

\bibitem{KawarabayashiR10}
Ken{-}ichi Kawarabayashi and Bruce~A. Reed.
\newblock Odd cycle packing.
\newblock In {\em Proceedings of the 42nd {ACM} Symposium on Theory of Computing, {STOC} 2010, Cambridge, Massachusetts, USA, 5-8 June 2010}, pages 695--704. {ACM}, 2010.
\newblock \href {https://doi.org/10.1145/1806689.1806785} {\path{doi:10.1145/1806689.1806785}}.

\bibitem{Klimentova21}
Xenia Klimentova, Ana Viana, João~Pedro Pedroso, and Nicolau Santos.
\newblock Fairness models for multi-agent kidney exchange programmes.
\newblock {\em Omega}, 102:102333, July 2021.
\newblock \href {https://doi.org/10.1016/j.omega.2020.102333} {\path{doi:10.1016/j.omega.2020.102333}}.

\bibitem{MaitiD22}
Arnab Maiti and Palash Dey.
\newblock Parameterized algorithms for kidney exchange.
\newblock In {\em Proceedings of the Thirty-First International Joint Conference on Artificial Intelligence, {IJCAI} 2022, Vienna, Austria, 23-29 July 2022}, pages 405--411. ijcai.org, 2022.
\newblock \href {https://doi.org/10.24963/IJCAI.2022/58} {\path{doi:10.24963/IJCAI.2022/58}}.

\bibitem{Marx20}
D{\'{a}}niel Marx.
\newblock Chordless cycle packing is fixed-parameter tractable.
\newblock In {\em 28th Annual European Symposium on Algorithms, {ESA} 2020, September 7-9, 2020, Pisa, Italy (Virtual Conference)}, volume 173 of {\em LIPIcs}, pages 71:1--71:19. Schloss Dagstuhl - Leibniz-Zentrum f{\"{u}}r Informatik, 2020.
\newblock \href {https://doi.org/10.4230/LIPICS.ESA.2020.71} {\path{doi:10.4230/LIPICS.ESA.2020.71}}.

\end{thebibliography}

\clearpage 

\appendix
\section{Omitted proofs from Section 2}

\exactSetCoverLemma*
\begin{proof}[Proof of \autoref{lm:exact set cover}]
    The algorithm follows by a standard application of dynamic programming over subsets~\cite[Section 6.1]{CyganFGKMPS16}. Given a set system~$(U, \mathcal{S})$, we solve a sequence of subproblems to determine whether~$U$ can be covered by a collection of pairwise disjoint sets from~$\mathcal{S}$. Number the sets in~$\mathcal{S}$ as~$\mathcal{S} = \{S_1, \ldots, S_m\}$. For a subset of elements~$U'\subseteq U$ and integer~$0 \leq i \leq m$, let~$T[U', i]$ be the Boolean value that indicates whether there is a collection of pairwise disjoint sets from~$\{S_1, \ldots, S_i\}$ whose union is exactly~$U'$. Hence the overall aim of the algorithm is to determine~$T[U,m]$. It is easy to see that the following recurrence holds for~$T$:
    \begin{equation}
        T[U', i] = \begin{cases}
            \mathsf{true} & \text{if $i=0$ and $U' = \emptyset$} \\
            \mathsf{false} & \text{if $i=0$ and $U' \neq \emptyset$} \\
            T[U', i-1] \vee (S_i \subseteq U' \wedge T[U' \setminus S_i, i-1]) & \text{if $i>0$.}
        \end{cases}
    \end{equation}
    By computing the values for~$T$ by increasing size of~$U'$ and increasing value of~$i$, we compute~$T[U, m]$. The number of subproblems is~$2^{|U|} \cdot (m+1)$. Since each subproblem can be handled in polynomial time, the overall algorithm can be implemented to run in time~$2^{|U|} \cdot (|U|+|\S|)^{\Oh(1)}$.
\end{proof}

\differentsolution*
\begin{proof}
    We prove the statement by showing that one of the sets in $\mathcal{F} \setminus \{X\}$ is disjoint from all sets in $\X \setminus \{X\}$. Since $\X$ is a set of pairwise disjoint sets and $X$ contains the entire core of $\mathcal{F}$, the core is disjoint from all sets $\X \setminus \{X\}$. Therefore, it remains to show that the sunflower contains a petal that is disjoint from all sets in $\X \setminus \{X\}$ (and which is not the petal belonging to $X$ itself).

    To this end, we first argue that $|\X| \leq h$. Every hitting set must contain an element of each set in $\X$. Since the set system has a hitting set of size $\leq h$ and $\X$ is a set of pairwise disjoint sets, it follows that $|\X| \leq h$.

    Every set in $\S$ (in particular also those in $\X$) is of size at most $d$ so that $\X \setminus \{X\}$ covers at most $d(h - 1)$ elements. Hence, $\X \setminus \{X\}$ covers at most $d(h-1)$ petals of $\mathcal{F}$ as these are all pairwise disjoint by definition of a sunflower. This leaves at least two petals of $\mathcal{F}$ disjoint from $\X \setminus \{X\}$. One of these is the petal that belongs to $X$, leaving at least one petal belonging to some other set $X' \subseteq \mathcal{F}\setminus\{X\}$ that is disjoint from $\S \setminus \{X\}$.
\end{proof}
\section{Omitted proofs from Section 3}
\rrexternalsets*
\begin{proof}
    Let $I=(U_1,\dots,U_p,\mathcal{S},k)$ be an instance of RP-$d$-SP where \autoref{rr: external sets} is applicable. Let $\mathcal{F}$ be a sunflower with core $Y$ as defined in this reduction rule and let $S \in \mathcal{F}$ be the minimum-size set that is removed by applying \autoref{rr: external sets} to obtain the instance $I'=(U_1,\dots,U_p,\mathcal{S} \setminus \{S\},k)$. We proceed by arguing that $I$ is a YES-instance if and only if~$I'$ is a YES-instance. \par
    ($\Rightarrow$)
    Assume $I$ to be a YES-instance, then there must exist a solution $\mathcal{X}$ for the instance $I$. We claim there exists a solution for $I$ that does not use $S$. If $S \not \in \mathcal{X}$ our claim holds. Suppose $S \in \mathcal{X}$. By \autoref{lm: constructing a different solution} with $h = k\cdot d$, there exists a set $S' \in \mathcal{F}$ such that $\X' = (\mathcal{X} \setminus \{S\}) \cup \{S'\}$ is a collection of pairwise disjoint sets. Since $\mathcal{X}$ covers at least $k$ elements, while $S$ and~$S'$ both belong to~$\mathcal{F}$ and $S$ is a set of minimum size in $\mathcal{F}$, it follows that $\X'$ covers at least $k$ elements. \par
    It remains to prove that no agent rejects $\X'$. Assume, for sake of contradiction, that some agent $i \in [p]$ rejects $\mathcal{X}'$ by rejecting $\X_{\operatorname{rej}}$ and replacing these sets with $\mathcal{X}_{\operatorname{int}}$. 
    We distinguish two different cases:
    \begin{itemize}
        \item Suppose there is no $i$-internal set intersecting the core $Y$. The petals of $S'$ and $S$ are not intersected by any internal set, so also not any $i$-internal set. Hence any set intersected by an $i$-internal set is contained in $\mathcal{X}$ if and only if it is contained in  $\X'$. Because agent~$i$ rejects  $\X'$ it must also reject $\mathcal{X}$, which follows from \autoref{ob: idential collections rejection iff}.
        \item Suppose there exists an $i$-internal set intersecting $Y$. By the definition of rejection,  
        $(\X'\setminus \X_{\operatorname{rej}}) \cup \X_{\operatorname{int}}$ covers more elements of agent $i$ than $\X'$ does. From the definition of
        \autoref{rr: external sets} it follows that $|S' \cap U_i|=|S \cap U_i|$.  Therefore, $(\X'\setminus \X_{\operatorname{rej}}) \cup \X_{\operatorname{int}}$ also covers more elements of agent $i$ than the original solution $\X$ does.
        \begin{itemize}
            \item If $S' \in \X_{\operatorname{rej}}$, then we consider $\X_{\operatorname{rej}}' = \X_{\operatorname{rej}} \setminus \{S'\} \cup \{S\}$. Observe that $\X \setminus \X_{\operatorname{rej}}' = \X' \setminus \X_{\operatorname{rej}}$ and $\X_{\operatorname{rej}}' \subseteq \X$. Agent~$i$ therefore rejects $\X$ by rejection $\X_{\operatorname{rej}}'$ and replacing them with $\X_{\operatorname{int}}$. 
            \item If $S' \notin \X_{\operatorname{rej}}$, then $\X_{\operatorname{int}} \cup \{S'\}$ is a collection of disjoint sets meaning that no set in $\X_{\operatorname{int}}$ intersects the core~$Y$. The petal $S \setminus Y$ cannot be intersected by any set in $\X_{\operatorname{int}}$, because $S \in \mathcal{F}$ and all the sets in $\X_{\operatorname{int}}$ are internal.
            Hence $(\X \setminus (\X_{\operatorname{rej}} \setminus \{S'\})) \cup \X_{\operatorname{int}}$ is a collection of disjoint sets and covers the same number of elements of $U_i$ as $(\X' \setminus \X_{\operatorname{rej}}) \cup \X_{\operatorname{int}}$. Consequently, agent $i$ rejects $\X$ by rejecting $\X_{\operatorname{rej}} \setminus \{S'\} \subseteq \X$ and replacing these with $\X_{\operatorname{int}}$.
        \end{itemize}
    \end{itemize}
    Since all cases lead to a contradiction, namely that agent $i$ would reject solution $\X$, we infer that $\mathcal{X}'$ is a solution for $I$ which does not contain $S$.
    From \autoref{lm: removing not used sets preserves solution} it then follows that $\X'$ is a solution for $I'$, hence $I'$ is a YES-instance.\par
    ($\Leftarrow$)
    Assume $I'$ to be a YES-instance, then there must exist a solution $\mathcal{X}'$ for the instance $I'$. Because $S$ is not an internal set it cannot be used in a collection of replacement sets $\X_{\operatorname{int}}$. Any rejection of $\X'$ in the instance $I$ would mean that it would also be rejected in the instance $I'$. Hence the collection $\X'$ is a solution for $I$, which implies $I$ is a YES-instance.
\end{proof}

\rrapplicable*
\begin{proof}
    We claim that if there exists some set of agents $A \subseteq [p]$ where $|A| \ge k$ and for each $a \in A$ it holds that $\mathcal{S}[U_a] \neq \emptyset$, then $I$ is a YES-instance. The union of maximum packings of every $(U_i, \S[U_i])$ is a rejection-proof collection of disjoint sets and contains at least $|A| \ge k$ elements. A solution would exists for $I$ meaning it is a YES-instance.
    Observe that checking whether such a set $A$ exists can be done in polynomial time.  \par
    If such a set $A$ does not exist, we check if there are at least $k \cdot f_d(k)$ internal sets. 
    If this is the case there exists an agent $i$ such that $|\mathcal{S}[U_i]| \ge f_d(k)$. From \autoref{lm:sunflower} it follows that we can find a sunflower $\mathcal{F} \subseteq \mathcal{S}[U_i]$ containing $d(k\cdot d-1)+2$ sets in polynomial time. We can perform \autoref{rr: internal sets} on $\mathcal{F}$.
    \par
    If there are less than $k \cdot f_d(k)$ internal sets and less than $k$ agents with an internal set our aim is to use \autoref{rr: external sets}. 
    From \autoref{lm:sunflower} it follows that a sunflower $\mathcal{F} \subseteq \mathcal{S}$ containing $d^{d-1} \cdot (d(k\cdot d-1)+1) +  d\cdot  k \cdot f_d(k) +1$  sets can be found in polynomial time.  At most $d\cdot  k \cdot f_d(k)$ petals can be intersected by an internal set. Create $\mathcal{F}'$ by keeping only those sets in $\mathcal{F}$  whose petal is not intersected by internal sets, it contains at least $d^{d-1} \cdot (d(k\cdot d-1)+1)+1$ sets. \par
    No $i$-internal set intersects an $j$-internal set if $i \neq j$ and the core $Y$ of $\mathcal{F}'$ contains at most $d-1$ elements, hence the set $C$ containing agents that have an internal set intersecting the core contains at most $d-1$ agents. For each of these agents $i \in C$ and each set $S \in \mathcal{F}'$ the value $|S \cap U_i|$ is an integer value between $1$ and $d$. In polynomial time we can create equivalence classes such that for each equivalence class $\sigma$ it holds for each combination of sets $S,S' \in \sigma$ and $i \in C$ that $|S \cap U_i|= |S' \cap U_i|$. There can be at most $d^{d-1}$ of these equivalence classes, so there exists an equivalence class $\mathcal{F}''$ which contains at least $d(k\cdot d-1)+2$ sets. Observe that $\mathcal{F}''$ is a sunflower on which we can perform \autoref{rr: external sets}.        
\end{proof}

\kernel*
\begin{proof}
    The safety of both reduction rules, together with \autoref{lem:reduction-rule-applicable}, almost directly implies the existence of a kernel. However, note that the safety of the reduction rules requires the additional precondition that the input contains a hitting set of size $\leq k\cdot d$. Therefore, most of this proof is used to show that we can identify all instances that do not have such a hitting set as YES-instances in polynomial time.

    Now, to define the kernel, let $(U_1,\dots,U_p,\mathcal{S},k)$ be the RP-$d$-SP instance received as input and suppose it contains at least $g_d(k)$ sets. We start by greedily constructing a maximal packing $\X \subseteq \S$ of pairwise disjoint sets. Next, we consider two cases based on the size of $\X$.

    For the first case, suppose that $|\X| \leq k$. Then, the union $H$ of all sets in $\X$ forms a hitting set to the given set system: by the maximality of $\X$, no set in $\S$ is disjoint from~$H$. Since all sets are of size $\leq d$, we find that the hitting set $H$ is of size $\leq k\cdot d$. Hence, the precondition to the safety of both reduction rules is satisfied. Moreover, since we assumed the input to contain at least $g_d(k)$ sets, \autoref{lem:reduction-rule-applicable} tells us that one of these reduction rules can be applied to the input in polynomial time or that we can conclude in polynomial time that the given instance is a YES-instance. Both reduction rules guarantee to remove at least one set from $\S$, making the input strictly smaller while maintaining the property that the instance has a hitting set of size $\leq k\cdot d$.

    For the second case, suppose that $|\X| > k$. We show that the input is a YES-instance in this case. To this end, suppose that $\X^\ast$ is an inclusion-wise maximal rejection-proof packing of disjoint sets in the given instance. That is, no strict superset of $\X^\ast$ is a rejection-proof packing of disjoint sets. Recall that there is always at least one such packing, as one can be constructed as the union of maximum packings of every $(U_i, \S[U_i])$ to be rejection-proof, greedily extended to be inclusion-wise maximal. Now, suppose for contradiction that $\X^\ast$ covers strictly less than $k$ elements. Then, since $|\X| > k$, there is at least one set $X \in \X$ that is disjoint from all sets in $\X^\ast$. As such, the collection $\X^\ast \cup \{X\}$ is still a packing of pairwise disjoint sets.

    Since we assumed $\X^\ast$ to be rejection-proof, the larger packing $\X^\ast \cup \{X\}$ is also rejection-proof: if there was an agent that would reject $\X^\ast \cup \{X\}$ by proposing the alternative packing $((\X^\ast \cup \{X\}) \setminus \Xrej) \cup \Xint$, then this agent would also reject $\X^\ast$ by proposing the alternative packing $(\X^\ast \setminus (\Xrej \setminus \{X\})) \cup \Xint$ contradicting that $\X^\ast$ is rejection-proof. This indeed shows $\X^\ast \cup \{X\}$ to be rejection-proof, in turn contradicting the assumption that $\X^\ast$ was an inclusion-wise maximal such set. We conclude that $\X^\ast$ covers at least $k$ elements, making the input instance a YES-instance.

    This insight yields the final ingredient for the kernel, which can now be obtained by repeatedly applying the above. That is, greedily computing a maximal packing $\X$ of disjoint sets, and, depending on the size of $\X$, either apply one of the reduction rules or conclude that $\X$ is a YES-instance. Repeating this procedure until the input contains less than $g_d(k)$ sets (in which case it also has a hitting set of size $\leq k\cdot d$) or until it has been correctly identified as a YES-instance thus yields a kernel of size~$g_d(k)$. Treating $d$ as a constant, $g_d(k)$ simplifies to $\Oh \left( \left(k^{d+1} \right)^d \right)$.
\end{proof}

\bruteforce*
\begin{proof}
    Given an RP-$d$-SP instance $I_1$ with $n$ elements and $m$ sets, we start by executing the kernel from \autoref{thm:kernel} in $(n+m)^{\Oh(1)}$ time. Afterwards, we have either correctly concluded that the input is a YES-instance, or we have obtained an equivalent RP-$d$-SP instance $I_2 = (U_1, \ldots, U_p, \S, k)$ with $\Oh\left( (k^{d+1})^d \right)$ sets and a hitting set of size at most $k\cdot d$. Furthermore, we assume that all elements in $I_2$ appear in at least one set of $\S$ or we spend $(n+m)^{\Oh(1)}$ time to remove all elements that do not appear in any set otherwise.

    Next, we describe a brute-force algorithm that determines whether there is a rejection-proof packing of pairwise disjoint sets in $I_2$ that covers at least $k$ elements. Since $I_2$ contains a hitting set of size $\leq k\cdot d$, a packing of pairwise disjoint sets must also be of size $\leq k\cdot d$. The algorithm iterates over all subsets $\X$ of $\S$ that consist of at most $k\cdot d$ sets. Then, for every such collection $\X$, it checks in polynomial time whether it is a collection of pairwise disjoint sets and whether it covers at least $k$ elements. If both those properties hold, it remains to check whether $\X$ is rejection-proof.

    If there is an agent $i$ that rejects the candidate solution $\X$ by proposing to add some set of internal cycles $\Xint$, then just from the contents of $\Xint$ we can deduce a suitable set $\Xrej$ of corresponding rejections and therefore the alternative packing $(\X \setminus \Xrej) \cup \Xint$. This packing can be constructed from $\X$ by removing all sets from $\X$ that intersect $\Xint$ and adding the sets in $\Xint$. In particular, there is no need to consider rejecting more cycles. Hence, to check whether there is an agent $i$ that rejects $\X$, it suffices to check whether there is a collection $\Xint$ of internal cycles that can be used to construct an alternative packing as described above. Since such a set $\Xint$ must be a collection of pairwise disjoint sets, we recall that it is of size at most $k\cdot d$. As such, the algorithm iterates over all subsets $\X' \subseteq \S$ of size at most $k\cdot d$ and checks for every one of them whether the following three properties hold.
    \begin{itemize}
        \item Whether $\X'$ consists of $i$-internal sets for some agent $i$.
        \item If so, whether $\X'$ is a collection of pairwise disjoint sets.
        \item If so, let $\Xrej \subseteq \X$ be the collection of sets in $\X$ that intersect $\X'$ and check whether $(\X \setminus \Xrej) \cup \X'$ covers more elements of agent $i$ than $\X$ does.
    \end{itemize}
    Clearly, all three properties can be checked in polynomial time. If there is a set $\X'$ that satisfies all these properties, then $\X$ is not rejection-proof. If there is no such set $\X'$, then $\X$ is rejection-proof and the algorithm can return YES. If, after doing this for all subsets $\X \subseteq \S$ of size at most $k\cdot d$, no rejection-proof packing $\X$ is found, the algorithm correctly returns NO.

    Next, we analyze the running time of this brute-force algorithm. For ease of reading, we initially describe this running time in terms of $n_2$, the number of elements in $I_2$, and $m_2$, the number or sets in $I_2$. Only at the end do we substitute $m_2 = \Oh\left( (k^{d+1})^d \right)$. Moreover, since we assumed that every element in $I_2$ appears in at least one set of $\S$ and every set contains at most $d$ elements, we get that $n_2 \leq d \cdot m_2$.
    
    The algorithm iterates over all subsets $\X \subseteq \S$ of size at most $k\cdot d$. There can be at most $(m_2)^{k\cdot d}$ such sets. For every such $\X$, the algorithm again iterates over all subsets $\X' \subseteq \S$ of size at most $k\cdot d$ and spends polynomial time in each iteration. This gives an upper bound on the total running time of $(m_2)^{k\cdot d} \cdot (m_2)^{k\cdot d} \cdot (n_2 + m_2)^{\Oh(1)}$. Combining and rewriting the two exponential terms while substituting that $n_2 \leq d\cdot m_2$, we obtain $2^{2k\cdot d \log(m_2)} \cdot ((d+1)m_2)^{\Oh(1)}$. Finally, treating $d$ as a constant and substituting $m_2 = \Oh\left( (k^{d+1})^d \right)$ so that $\log(m_2) \leq \Oh(k)$, we obtain a running time of $2^{\Oh(k \log k)} \cdot k^{\Oh(1)} = 2^{\Oh(k \log k)}$.

    Hence, running the kernel (which takes polynomial time in the original input size) and the brute-force algorithm in sequence takes $2^{\Oh(k \log k)} + (n+m)^{\Oh(1)}$ time in total.
\end{proof}
\section{Omitted proofs from Section 4}
\ReductionSound*
\begin{claimproof}
        Let $\varphi: V(H) \rightarrow V(G)$ be a subgraph isomorphism from $H$ to a subgraph of $G$. We show how the function $\varphi$ can be used to construct a packing $\C$ of $3$-cycles that covers all vertices of $G'$ except $z$. As such, it covers $k$ vertices. \autoref{fig:eth-reduction-isomorphisms} shows an example on the left.
        
        First, for every pair $u \in V(H)$ and $x \in V(G)$ such that $x = \varphi(u)$, we add the I-cycle $\langle u_1, u_2, x_3 \rangle$ to our solution $\C$. The union of all such cycles covers all vertices of $W_1$ and $W_2$ and it covers $n_H$ of the vertices in $W_3$.

        We use F-cycles to cover the remaining $n_G - n_H$ vertices of $W_3$. Note that for every combination of a vertex $x_3 \in W_3$ and an integer $i \in [n_G -n_H]$, the graph $G'$ contains the F-cycle $\langle v_3, i_4, i_5 \rangle$. Hence, covering the remaining vertices of $W_3$ is simple: match every uncovered vertex $x_3 \in W_3$ to an arbitrary but unique integer $i \in [n_G -n_H]$ and add the corresponding F-cycle $\langle v_3, i_4, i_5 \rangle$ to $\C$. 
        
        After the previous step, $\C$ also covers all vertices from $W_4$ and $W_5$. As such, it covers all vertices from $G'$ except $z$. Recall that $G'$ has $3\cdot n_G + 1$ vertices, which means that $\C$ covers $k = 3 \cdot n_G$ vertices. It remains to show that $\C$ is rejection-proof.

        Suppose, for contradiction, that $\C$ is not rejection-proof. Then, one of the two agents can propose an alternative packing that covers more vertices of $V_1'$ or $V_2'$ respectively than $\C$ does. Since $\C$ covers all vertices from $G'$ except $z \in V_1'$, only agent $1$ can propose an alternative packing that covers more vertices of its corresponding vertex set. Moreover, to cover more vertices from $V_1'$ than $\C$, the alternative packing must cover all vertices from $V_1'$. We use this insight to derive two more facts about rejections in this setting. 

        First, we argue that agent $1$ does not reject F-cycles. After all, if they did, this would leave the vertices in the rejected F-cycle uncovered, including in particular some vertex $i_5 \in W_5 \subseteq V_1'$. Since the vertices in $W_5$ do not appear in any internal cycles, there are no cycles that agent $1$ could propose to add in order to cover vertex $i_5$. We record this statement for later use as the following fact.

        \begin{fact} \label{obs:no-reject-f-cycles}
            Agent $1$ does not reject F-cycles.
        \end{fact}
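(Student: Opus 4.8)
The plan is to derive a contradiction from the assumption that agent~$1$ rejects some F-cycle. Recall that we have already established two preconditions for any rejection here: only agent~$1$ can reject (since $z \in V_1'$ is the sole uncovered vertex of $\C$), and to do so the alternative packing $\C' = (\C \setminus \Crej) \cup \Cint$ must cover \emph{all} of $V_1'$, where $\Cint$ consists only of $1$-internal cycles. The key structural observation I would exploit is that an F-cycle $\langle x_3, i_4, i_5 \rangle$ is \emph{not} $1$-internal: it contains the vertex $i_4 \in W_4 \subseteq V_2'$ belonging to agent~$2$. Consequently, agent~$1$ can never place an F-cycle into $\Cint$; an F-cycle can only leave the packing (via $\Crej$), never be added to it.

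Next, I would pin down exactly which cycles can cover a vertex $i_5 \in W_5$ by tracing the edge construction. The only edge entering $i_5$ is $(i_4, i_5)$ from step~\ref{itm:F-edge-straight}, and every edge leaving $i_5$ is an edge $(i_5, x_3)$ to $W_3$ from step~\ref{itm:F-edges-other}. Hence in any cycle through $i_5$ its in-neighbour is $i_4$ and its out-neighbour is some $x_3 \in W_3$; since $i_4 \neq x_3$, no $2$-cycle through $i_5$ exists, and the only $3$-cycle through $i_5$ is $\langle i_5, x_3, i_4\rangle$, i.e.\ an F-cycle. In short, the vertices of $W_5$ appear in F-cycles and in no other cycle of $G'$, so in particular they lie in no $1$-internal cycle.

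Combining these two facts finishes the argument. Suppose agent~$1$ rejects an F-cycle $F = \langle x_3, i_4, i_5\rangle$, so $F \in \Crej$ and $F \notin \C'$. In the disjoint packing $\C$ the vertex $i_5$ was covered by exactly one cycle, namely $F$, so no surviving cycle of $\C \setminus \Crej$ covers $i_5$. By the second paragraph $i_5$ lies only in F-cycles, and by the first paragraph no F-cycle belongs to $\Cint$; hence no added internal cycle covers $i_5$ either. Therefore $i_5 \in W_5 \subseteq V_1'$ is left uncovered by $\C'$, contradicting the requirement that a rejecting agent~$1$ covers all of $V_1'$. I expect the only delicate point is the edge-tracing of the second paragraph, which is a routine but necessary check that F-cycles are genuinely the sole cycles meeting $W_5$; everything else follows immediately from the two preconditions recorded above.
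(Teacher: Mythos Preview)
Your argument is correct and follows essentially the same route as the paper: rejecting an F-cycle would leave its vertex $i_5 \in W_5 \subseteq V_1'$ uncovered, and since vertices of $W_5$ lie in no $1$-internal cycle, agent~$1$ cannot restore coverage, contradicting that the alternative packing must cover all of $V_1'$. Your version is simply more explicit, spelling out the edge-tracing that shows F-cycles are the only cycles through $W_5$ and noting that F-cycles themselves are not $1$-internal, whereas the paper asserts these points without detailed justification.
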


        For our second fact, we additionally note that, to cover all vertices in $V_1'$, agent $1$ must in particular propose to add some cycle $C_z$ that covers vertex $z$. Now, we argue that agent $1$ does not reject I-cycles that are vertex-disjoint from $C_z$.

        Suppose, for contradiction, that agent $1$ does reject such a cycle and consider the vertex $x_3 \in W_3 \subseteq V_1'$ that this I-cycle contains. By rejecting this cycle, $x_3$ is left uncovered since it is also not covered by $C_z$ by assumption. The only internal cycles that $x_3$ is part of are R-cycles and N-cycles, all of which also contain $z$. Hence, agent $1$ cannot propose to add any of these to the given packing, as none of these cycles are disjoint from the cycle $C_z$ that they have already proposed to add. As such, $x_3$ cannot be covered in an alternative packing proposed by agent $1$. This contradicts the observation that such an alternative packing must cover all vertices from $V_1'$. We record the obtained statement as the following fact for later~use.

        \begin{fact} \label{obs:no-reject-i-cycles}
            Agent $1$ does not reject I-cycles that are vertex-disjoint from $C_z$.
        \end{fact}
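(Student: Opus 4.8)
The plan is to prove this fact by contradiction, leaning on the structural observation established just before its statement: since $\C$ leaves only $z \in V_1'$ uncovered and every vertex of $V_2'$ is already covered, only agent~$1$ can reject, and any rejecting alternative packing $\C' = (\C \setminus \Crej) \cup \Cint$ of agent~$1$ must cover \emph{all} of $V_1'$, so in particular $\Cint$ contains a $1$-internal cycle $C_z$ through $z$. So I would assume, for contradiction, that agent~$1$ rejects some I-cycle $\langle u_1, u_2, x_3 \rangle$ of $\C$ that is vertex-disjoint from $C_z$, and track what happens to its $W_3$-vertex $x_3 \in W_3 \subseteq V_1'$.

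Since $\C'$ must cover $x_3$, I would first locate where this coverage could come from. In the disjoint packing $\C$, the rejected I-cycle is the unique cycle through $x_3$, so no surviving cycle of $\C \setminus \Crej$ covers $x_3$; and $x_3 \notin V(C_z)$ by the vertex-disjointness hypothesis, so $C_z$ does not cover it either. Hence $x_3$ can only be covered by some cycle of $\Cint$ other than $C_z$. The next step is to enumerate the $1$-internal cycles of $G'$ through $x_3$: scanning the six cycle classes, I-cycles use a $W_2 \subseteq V_2'$ vertex and F-cycles use a $W_4 \subseteq V_2'$ vertex, so neither is $1$-internal, leaving only the R-cycle $\langle x_3, z\rangle$ and the N-cycles $\langle x_3, y_3, z\rangle$ as the $1$-internal cycles containing $x_3$.

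The decisive step, which I also expect to be the only real (if modest) obstacle, is to notice that every one of these candidate cycles passes through $z$. Because $\Cint$ is a collection of pairwise disjoint cycles and $C_z \in \Cint$ already occupies $z$, no further cycle of $\Cint$ may reuse $z$; thus no R-cycle or N-cycle through $x_3$ can lie in $\Cint$ alongside $C_z$. Consequently $x_3$ is covered by no cycle of $\C'$, contradicting that $\C'$ covers all of $V_1'$, and the fact follows. The subtlety worth stating explicitly is the precise role of the vertex-disjointness hypothesis: it is exactly what guarantees $x_3 \notin V(C_z)$, which is what rules out $C_z$ covering $x_3$ itself and forces the doomed appeal to a second $z$-using internal cycle.
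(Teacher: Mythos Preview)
Your argument is correct and follows essentially the same route as the paper: assume such an I-cycle is rejected, track its $W_3$-vertex $x_3 \in V_1'$, observe that the only $1$-internal cycles through $x_3$ are R- and N-cycles (all of which pass through $z$), and conclude that $x_3$ cannot be re-covered in $\Cint$ because $C_z$ already occupies $z$. You are slightly more explicit than the paper in spelling out why neither the surviving cycles of $\C \setminus \Crej$ nor $C_z$ itself can cover $x_3$, but the core idea is identical.
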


        Now, we use the two facts above to derive a contradiction. We do so in a case distinction on the type of cycle that $C_z$ is. As $C_z$ is an internal cycle that covers $z$, it is either an R-cycle or an N-cycle.        
        
        \subparagraph{Case 1: suppose that $C_z$ is an R-cycle.} Let $C_z = \langle x_3, z \rangle$ for some $x_3 \in W_3$. As agent~$1$ was said to modify $\C$ by including this cycle, the cycle~$C$ from~$\C$ that covers~$x_3$ must be rejected to maintain a set of pairwise disjoint cycles. By \autoref{obs:no-reject-f-cycles},~$C$ is not an F-cycle. Since~$\C$ only contains I-cycles and F-cycles, it follows that~$C$ is an I-cycle. 
        
        Let $C$ be the I-cycle $\langle u_1, u_2, x_3 \rangle$. Rejecting this cycle leaves the vertex $u_1 \in V_1'$ uncovered, implying that it must be covered by some internal cycle. Vertices from $W_1$ only appear in internal cycles with other vertices from $W_1$. Including any such internal cycle $C'$ would require agent $1$ to reject the cycle(s) from $\C$ that already cover a vertex from $C'$. Since all vertices of $W_1$ are covered by an I-cycle in $\C$, agent $1$ has to reject at least one more such cycle. This contradicts \autoref{obs:no-reject-i-cycles}, since agent $1$ would have to reject an I-cycle different from $C$, the only one that intersects $C_z$.

        \subparagraph{Case 2: suppose that $C_z$ is an N-cycle.} Let $C_z = \langle x_3,y_3,z \rangle$ for some pair of distinct vertices $x_3,y_3 \in W_3$. By including this cycle, the cycles $C_x \in \C$ and $C_y \in \C$ that cover $x_3$ and $y_3$ respectively must be rejected. $\C$ only contains I-cycles and F-cycles, but we see that neither $C_x$ nor $C_y$ is an~F-cycle by \autoref{obs:no-reject-f-cycles}. Hence, both cycles are I-cycles. Because I-cycles contain only one vertex from $W_3$, $C_x$ and $C_y$ are in particular \emph{distinct} I-cycles. 
        
        We obtain that there are distinct vertices $u,v \in V(H)$ such that $C_x = \langle u_1, u_2, x_3 \rangle$ and~$C_y = \langle v_1, v_2, y_3 \rangle$ respectively. To be able to reject $\C$, agent $1$ must add one or more cycles to cover $u_1$ and $v_1$ as these vertices belong to $V_1'$. Additionally, by \autoref{obs:no-reject-i-cycles}, agent~$1$ cannot reject any more I-cycles to uncover additional vertices from~$W_1$ as~$C_x$ and~$C_y$ --- the only cycles that intersect~$C_z$ --- have already been rejected. Combined with the fact that vertices from $W_1$ only appear in internal cycles with other vertices from $W_1$, we conclude that agent $1$ has to include the A-cycle $\langle u_1, v_1 \rangle$ to cover $u_1$ and $v_1$. To arrive at our final contradiction, we show that this cycle does not exist in $G'$.
        
        By construction of the packing $\C$, the fact that $C_x = \langle u_1, u_2, x_3 \rangle$ and $C_y = \langle v_1, v_2, y_3 \rangle$ are in this packing means that the subgraph isomorphism $\varphi$ maps $u$ to $x$ and $v$ to $y$. Additionally, by step \ref{itm:N-edge} in the construction of the edge set of $G'$, the existence of the cycle $\langle x_3,y_3,z \rangle$ implies that $\{x, y\} \notin E(G)$. Now, because $\varphi$ is a subgraph isomorphism, we know that $\{u,v\} \notin E(H)$. In turn, this implies that neither $(u_1, v_1)$ nor $(v_1, u_1)$ exists in $G'$. Hence, the A-cycle between these two vertices also does not exist in $G'$.

        This yields the final contradiction to conclude that $\C$ is rejection-proof.
    \end{claimproof}
\section{Omitted proofs from Section 5}

\thmOneRejection*
\begin{proof}[Proof of \autoref{thm:1-rejection-algo}]
    Consider an instance $I=(U_1,\dots,U_p,\mathcal{S},k)$ of $1$-RP-$d$-SP and let~$U := \bigcup_{i \in [p]} U_i$. Recall that~$|U|=n$ and $|\mathcal{S}|=m$. The high-level approach is to reduce the $1$-RP-$d$-SP instance to a sequence of instances of \textsc{Exact Set Cover}. Each such instance consists of a set system~$\mathcal{S}'$ over a universe~$U'$ and asks whether~$U'$ can be covered by a collection of \emph{disjoint} sets from~$\mathcal{S}'$. The outline of the algorithm is as follows.
    \begin{enumerate}
        \item We create a set~$\mathcal{I}$ of \textsc{Exact Set Cover} instances of the form~$(U' \subseteq U, \mathcal{S}' \subseteq \S[U'])$ with the following properties:
        \begin{itemize}
            \item any collection of disjoint sets $\mathcal{X} \subseteq \mathcal{S}'$ that completely covers $U'$ is a solution for the $1$-RP-$d$-SP instance $I$, and
            \item for any solution~$\mathcal{X}$ to instance~$I$, there is an \textsc{Exact Set Cover} instance $(U' = \bigcup_{A \in \mathcal{X}} A, \mathcal{S}')$ in~$\mathcal{I}$ with~$\mathcal{X} \subseteq \mathcal{S}'$.
        \end{itemize}
        \item We run the algorithm for \textsc{Exact Set Cover} from \autoref{lm:exact set cover} on each pair~$(U', \mathcal{S'}) \in \mathcal{I}$, and output YES if and only if some instance in~$\mathcal{I}$ has an exact set cover.
    \end{enumerate}
    It is easy to verify that when~$\mathcal{I}$ has both stated conditions, the results of the \textsc{Exact Set Cover} computations yield the correct answer for $1$-RP-$d$-SP. Hence the construction of~$\mathcal{I}$ reduces the task of solving $1$-RP-$d$-SP to a sequence of \textsc{Exact Set Cover} instances whose universe has size at most~$|U|$. The set~$\mathcal{I}$ is constructed by the following process.
    \begin{itemize}
        \item For each set~$U' \subseteq U$ of size at least~$k$, for which there is no agent~$i \in [p]$ that has an $i$-internal set disjoint from~$U'$, do as follows. Let~$\mathcal{S}'$ contain all sets~$A \in \mathcal{S}[U']$ satisfying
        \begin{itemize}
            \item for each agent~$i \in [p]$, each collection of disjoint sets~$\mathcal{R} \subseteq \mathcal{S}[U_i \setminus (U' \setminus A)]$ covers at most~$|A \cap U_i|$ elements from~$U_i$,
        \end{itemize}
        and add the pair~$(U', \mathcal{S}')$ to~$\mathcal{I}$.
    \end{itemize}
    Intuitively, we consider all sets~$U' \subseteq U$ that could be the set of elements covered by a solution (hence~$|U'| \geq k$). Then we consider which sets~$A \in \mathcal{S}$ could be used by a $1$-rejection-proof solution that covers exactly~$U'$. All relevant sets are contained in~$\mathcal{S}[U']$ as otherwise the solution covers an element outside~$U'$. We remove any set~$A$ from consideration if it is clear that using the set~$A$ in a candidate solution that covers~$U'$ would lead to a rejection.

    Next, we prove that the constructed set~$\mathcal{I}$ has both claimed properties. We start by proving that for any pair $(U',\mathcal{S'}) \in \mathcal{I}$, any collection of disjoint sets $\mathcal{X} \subseteq \mathcal{S}'$ covering all elements of $U'$ is a solution for $I$. By construction, the set $U'$ contains at least $k$ elements and $\mathcal{X}$ is a collection of disjoint sets. It remains to show that $\mathcal{X}$ is 1-rejection-proof. Assume, for sake of contradiction, that some agent $i$ rejects sets $\X_{\operatorname{rej}}$ and replaces these sets with the sets $\X_{\operatorname{int}}$. If $\X_{\operatorname{rej}}$ is empty this would imply that there exists an $i$-internal set in~$\mathcal{S}$ that is disjoint from $U'$; but then the pair $(U',\mathcal{S}')$ would not have been added to~$\mathcal{I}$. Hence~$\X_{\operatorname{rej}}$ is non-empty, and since the problem $1$-RP-$d$-SP allows rejection of at most one set, we can consider the unique set $R \in \mathcal{X}_{\operatorname{rej}}$ that is rejected. By definition of rejection, the sets in $\mathcal{X}_{\operatorname{int}}$ cover more elements of agent $i$ than $R$ does and $\mathcal{X}_{\operatorname{int}} \subseteq \mathcal{S}[U_i \setminus (U' \setminus R)]$. But this shows that the condition for including~$A$ in~$\mathcal{S}'$ is violated. Hence any collection of disjoint sets from~$\mathcal{S}'$ covering~$U'$ is indeed a solution to the $1$-RP-$d$-SP instance~$I$.

    We now prove that~$\mathcal{I}$ has the second property. So suppose there exists a solution $\mathcal{X}$ for $I$. We will prove that there is a pair $(U', \mathcal{S}')$ in~$\mathcal{I}$ such that $\mathcal{X} \subseteq \mathcal{S}'$ and $U' =\bigcup_{A \in \mathcal{X}} A$. Since $\X$ is a solution for $I$, there cannot be an internal set disjoint from $U'$ (it would lead to a rejection) and $|U'| \ge k$. Hence a pair must be created with $U'$. If some set $A \in \mathcal{X}$ does not meet the requirement for being included in the corresponding collection~$\mathcal{S}'$, then the agent for which~$A$ is an internal set rejects~$\X$ since~$A$ can be replaced by a family of disjoint sets in~$\mathcal{S}[U_i \setminus (U' \setminus A)]$ (which are therefore also disjoint from the remaining sets in~$\X$) to cover more elements from~$U_i$. This would contradict that $\X$ is a solution. Hence the collection~$\mathcal{S}'$ constructed for~$U'$ has~$\X \subseteq \mathcal{S}'$ and~$(U', \mathcal{S}') \in \mathcal{I}$. This establishes the second property for~$\mathcal{I}$ and therefore proves correctness of the algorithm.
    
    It remains to bound the running time. There are $\sum_{i=k}^n \binom{n}{i}$ subsets of $U$ of size at least $k$. Checking for a subset $U'$ if there is an internal set disjoint from $U'$ and calculating $\mathcal{S}[U']$ takes polynomial time in $n+m$. For each choice of $U'$, we check for at most $m$ sets for at most $p$ agents if keeping this set could lead to rejections. The evaluation is only done when for every agent $i$ it holds that $\mathcal{S}[U_i \setminus U'] = \emptyset$. Then for any set $A$ for which the condition is evaluated, a pairwise collection of disjoint sets $\mathcal{R} \subseteq \mathcal{S}[U_i \setminus (U ' \setminus A)]$ contains at most $d$ sets since each such set contains an element of~$A$ which has size at most~$d$. Therefore, this check can be done in time $O(n \cdot m^d)$. 
    As~$d \in \Oh(1)$, in total Step 1 takes $\sum_{i=k}^n \binom{n}{i} \cdot(n + m )^{\Oh(1)}$ time.  
    Step 2 takes at most $\sum_{i=0}^n \binom{n}{i} \cdot 2^{i} \cdot 1^{n-i} \cdot (m+n)^{\Oh(1)} = 3^n \cdot (m+n)^{\Oh(1)}$ time, which follows from the binomial theorem and \autoref{lm:exact set cover}. This concludes the proof.
\end{proof}

\end{document}